\newcommand{\Xup}{X^{\uparrow}}
\newcommand{\Xdo}{X^{\downarrow}}
\renewcommand{\hat}{\widehat}
\begin{document}
\begin{frontmatter}

\title{Continuous-time Duality for Super-replication with Transient Price Impact}
 \runtitle{Duality with Transient Price Impact}

 \author{Peter Bank$^*$ and Yan Dolinsky$^\dagger$ \corref{}}

 \address{P. Bank: Department of Mathematics, TU Berlin.
   e.mail: bank@math.tu-berlin.de\\
   Y. Dolinsky Department of Statistics, Hebrew University and School
   of Mathematical \\Sciences, Monash University.
   e.mail:yan.dolinsky@mail.huji.ac.il} \affiliation{Hebrew
   University$^\dagger$ and Monash University$^\dagger$, and TU
   Berlin$^*$} \runauthor{P. Bank and Y. Dolinsky} 

\thankstext{}{The
   author YD is partially supported by the ISF Grant 160/17. PB wishes
   to thank Monash University for inviting him to visit for two weeks in 2018 when part of
   the research on this paper was completed.} 

\date{\today}

\maketitle
\begin{abstract}
   We establish a super-replication duality in a continuous-time
  financial model as in~\cite{BankVoss:18b} where an investor's trades
  adversely affect bid- and ask-prices for a risky asset and where
  market resilience drives the resulting spread back towards zero at
  an exponential rate. Similar to the literature on models with a
  constant spread (cf., e.g., \cite{CvitKarat:96,
    KallsenMuhleKarbe:10, CzichowskySchachermayer:17}), our dual
  description of super-replication prices involves the construction of
  suitable absolutely continuous measures with martingales close to
  the unaffected reference price. A novel feature in our duality is a
  liquidity weighted $L^2$-norm that enters as a measurement of this
  closeness and that accounts for strategy dependent spreads. As
  applications, we establish optimality of buy-and-hold strategies for
  the super-replication of call options and we prove a verification
  theorem for utility maximizing investment strategies.
\end{abstract}
\begin{keyword}[class=MSC]
\kwd[Primary ]{91G10, 91G20}
\end{keyword}
\begin{keyword}
\kwd{duality} \kwd{permanent and transient price impact}  \kwd{super-re\-pli\-cation}
\kwd{consistent price systems} \kwd{shadow price}
\end{keyword}
\end{frontmatter}

\section{Introduction}\label{sec:1}%\setcounter{equation}{0}

Financial models with transaction costs have been a great source of
intriguing challenges for stochastic analysis and control
theory. Starting with \cite{DavisNorm:90}, \cite{ShrSoner:94},
\cite{CvitKarat:96} strong emphasis has been put on the singular
control problems that emerge in models with a constant spread. The
duality theory for these models is now developed in great detail (see
\cite{JK,CvitKarat:96,Kaban,KabanStr:01,CampiSchachermayer:06,GRS10}). This has been used to study utility maximization
via its relation to shadow prices
(\cite{KallsenMuhleKarbe:10,GMKS2013,CzichowskySchachermayer:16,CzichowskySchachermayer:17,BY2019}) and has also been instrumental in the
development of asymptotic approaches for small transaction costs
(\cite{Schachermayer:17} and the references therein).

While convenient mathematically, the assumption of a constant spread
is justified only for very liquid assets. Less liquid assets will have
a spread that widens when a large transaction is being executed and,
upon completion of the transaction, the spread will decrease again due
to market resilience. This is well-known in the order execution
literature (\cite{ObiWang:13}, \cite{GatheralSchiedSlynko:12}) where
one derives optimal schedules for unwinding large positions that
account for such (at least partially) transient price
impact.

Following the approach proposed in \cite{BankVoss:18b}, we introduce a
model with transient price impact that allows for impact from both
buying and selling a risky asset. We even allow for stochastic market
depth and resilience that merely have to satisfy a certain
monotonicity assumption required to obtain convex wealth
dynamics. Instead of the utility maximization problem of interest
in~\cite{BankVoss:18b}, we focus here on the fundamental problem of
super-replicating an arbitrary contingent claim in a cost optimal
way. For models with a constant spread, the duality theory of this
problem is well-understood in terms of consistent price systems that
are based on the construction of measures with martingales that do not
deviate from the asset price by more than the exogenously given
spread; see \cite{Sch:14} and the reference therein. This
structure is recovered here but, due to the endogenous nature of our
spreads, we also have to optimally determine these. Our main result,
Theorem~\ref{thm:1}, shows how to suitably penalize possible choices
by a liquidity-dependent $L^2$-norm, characterizing the
super-replication costs in the form of a convex risk measure. An
interesting point to observe is that, contrary to the models with
exogenous spread, our model does not require any notion of
admissibility for our trading strategies. As already observed in a
model with purely temporary price impact in \cite{GuasoniRasonyi:15},
this is due to the impossibility to scale strategies at will since
such scaling incurs super-linearly growing costs.

The proof of this result rests on a particularly convenient expression
for the terminal wealth resulting from a strategy that also reveals
the convexity of this functional. As usual, a lower bound on
super-replication costs is comparably easy to obtain given the consistent price
system structure imposed by our dual variables. The proof of absence
of a duality gap, i.e., establishing an upper bound is more
involved. The first step is a rather standard separation argument
(Lemma~\ref{lem:1}) which gives us a suitable pricing measure. As a
second step, we introduce the martingale of the consistent price
system as a Lagrange multiplier enforcing the terminal liquidation
constraint (Lemma~\ref{lem:2}). The crucial third step is the
construction of a suitable spread and the identification of its
liquidity dependent $L^2$-norm as the correct penalty term for our
duality (Lemma~\ref{lem:3}). This is made possible by applying a
stochastic representation theorem from~\cite{BankElKaroui:04} which so
far was used only in connection with one-sided singular control
problems (\cite{BankRiedel:01}, \cite{ChiarollaFerrari:14},
\cite{Ferrari:15}, \cite{BankKauppila:17}) and here finds its first
application in a two-sided control problem with bounded variation
rather than increasing controls.

As an application, we show that also in our transient price impact
model the best way to super-replicate a call option is, under natural
conditions, to buy and hold the asset until maturity. This is in line
with results on models with exogenous spread;
cf. \cite{SSC95,Kus95,LS97,BT00,JLR03,GRS:08}.  We also provide a
verification result for identifying utility maximizing strategies by
the construction of suitable shadow prices similar to results with
fixed spread (\cite{KallsenMuhleKarbe:10,CzichowskySchachermayer:16})
and to a result with purely temporary price impact
(\cite{GuasoniRasonyi:15}).

\section{Trading in a transient price impact model}
\label{sec:2}

We consider a financial model with transient price impact similar to
\cite{BankVoss:18b}. Specifically, we consider a ``large'' investor who
can invest in a riskless savings account bearing zero interest (for
simplicity) and whose trades into and out of a risky asset move bid-
and ask prices that, in addition, are also driven by some exogenous
noise. This noise will be specified by a continuous, adapted process
$P=(P_t) _{t\geq 0}$ on a filtered probability space
$(\Omega, (\mathcal F_t)_{t\geq 0}, \mathbb P)$ where $\cF_0$ is
generated by the $\P$-null sets. We will assume that the filtration is
continuous:

\begin{Assumption}\label{asp:1}
  All $(\mathcal F_t)_{t\geq 0}$-martingales have a continuous
  version.
\end{Assumption}

\begin{Remark}
  This assumption is satisfied, e.g., if $(\mathcal F_t)_{t\geq 0} $
  is generated by a Brownian motion. It rules out complete surprises
  as generated, for instance, by the jumps of Poisson processes. The
  assumption ensures that there will not be any common jumps of
  trading policies and our martingale prices to be introduced later
  and it will allow us to apply a stochastic representation theorem
  from \cite{BankElKaroui:04} which is key for our analysis. From the
  duality theory of proportional transaction costs, see in particular
  \cite{CzichowskySchachermayer:16}, it is known that exogenous jumps
  lead to the need for l\`adl\`ag strategies and a considerably more
  delicate analysis which, in our context for strategy-dependent
  spreads, we have to leave for future research. Moreover, jumps, also
  by $P$ or the market depth process $\delta$ to be introduced
  shortly, would pose the challenge to specify what information on the
  jump is available when to the investor and how he can act on
  it. While certainly relevant from a financial-economic point of
  view, these questions are also beyond the scope of the present
  paper. %We refer to \cite{BankBesslich} for some considerations in this direction that indicate the great flexibility in information modelling one has in this context.
\end{Remark}

The large investor's trading strategy is described by his given
initial holdings $x_0 \in \RR$ and a right-continuous, predictable
process $X=(X_t)_{t \geq 0}$ of bounded variation specifying the
number of risky assets held at any time. We denote by $\Xup$ and
$\Xdo$ the right-continuous predictable increasing and decreasing part
resulting from the Hahn-decomposition of
\begin{align}
  \label{eq:1}
  X_t=x_0+\Xup_t-\Xdo_t, \; t \geq 0, \quad X_{0-}\set x_0, \quad \Xup_{0-}\set \Xdo_{0-}\set0.
\end{align}
The set of all such strategies will be denoted by $\cX$.

Trades will permanently affect the mid-price $P^X$ which, in
line with \cite{HubermanStanzl:04}, we let take the linear form
\begin{align} \label{eq:2}
  P^X_t \set P_t + \iota X_t, \; t\geq 0, \quad P^X_{0-} \set P_0 + \iota x_0,
\end{align}
for some impact parameter $\iota \geq 0$. Trades will in addition drive bid- and
ask-prices away from the mid-price. Without further interventions,
market resilience lets bid- and ask-prices then gradually revert
towards the mid-price. We model this by letting the half-spread
follow the dynamics
\begin{align}
  \label{eq:3}
  d\zeta^X_t =  \frac{1}{\delta_t}(d\Xup_t+d\Xdo_t)-r_t \zeta^X_t dt,
  \quad \zeta^X_{0-} \set \zeta_0,
\end{align}
for a given initial value $\zeta_0 \geq 0$ and a given market depth
process $\delta$ and resilience rate $r$.

\begin{Remark} One way to interpret these spread dynamics is to think
  of trades eating into their respective side of the limit order book,
  widening the spread to an extent which depends on the current order
  book height $\delta_t$ while the market's resilience ensures that,
  without further trades, the spread will diminish at the exponential
  rate $r_t$. For simplicity, we assume the order book height at any
  time to be constant across ticks and identical for the ask- and the
  bid-side. More flexible nonlinear spread dynamics as
  in~\cite{AlfFruSch:10, PredShaShr:11} are conceivable but beyond the
  scope of the present paper. Note also that the mesoscopic time-scale
  underlying our model does not allow us to accommodate all the market
  microstructure effects so crucial for high-frequency trading, but
  instead suggests to view our model's market depth and resilience
  processes also as mesoscopic specifications of these market
  characteristics that would in practice need to be calibrated, e.g.,
  to moving averages of order book heights and order flow dynamics
  accounting for both limit and market orders; see
  \cite{ContKukanovStoikov:14} for an empirical study in this vein
  that also supports linear price impact specifications as in our
  stylized model.
\end{Remark}

We will require the following regularity of market depth $\delta$ and
resilience rate $r$:

\begin{Assumption}\label{asp:2}
  The market depth $\delta=(\delta_t)_{t\geq 0}>0$ is continuous and
  adapted. The resilience rate $r=(r_t) _{t\geq 0} \geq 0$ is predictable and such that
  $\delta/\rho$ is uniformly bounded away from zero and infinity on
  $\Omega \times [0,T]$ for any finite time horizon $T<\infty$ where
  \begin{align}
    \label{eq:4}
    \rho_t \set \exp\left(\int_0^t r_s \,ds\right), \quad t \geq 0.
  \end{align}
  Moreover, the resilience rate dominates the changes in market depth
  in the sense that
  \begin{align}
  \label{eq:5}
    \kappa_t \set \delta_t/\rho^2_t \text{ is strictly decreasing in }  t \geq 0.
  \end{align}

\end{Assumption}
\begin{Remark}
  As will become apparent in  Lemma~\ref{lem:1}, condition~\eqref{eq:5} is
  needed to ensure that the wealth dynamics are convex. When $\delta$
  is absolutely continuous it ammounts to the requirement
  \begin{align}
    \label{eq:6}
   \frac{1}{2} \frac{d}{dt} \log \delta_t < r_t, \quad t \geq 0,
  \end{align}
  i.e. relative changes in the market's depth have to be dominated by
  the market's resilience. In particular, Assumption~\ref{asp:2} holds
  when $\delta$ and $r$ are strictly positive constants. The question
  whether one can develop a duality theory without this assumption is
  left for future research; see, however, \cite{BankFruth:14} for
  considerations in this direction in a deterministic order execution
  problem.
\end{Remark}

By time $T \in (0,\infty)$ the induced investor's cash position will have evolved
from its given initial value $\xi_{0} \in \RR$ to $\xi^X_T$ as determined
by the profits and losses made from trading in and out of the risky
asset. These trades are executed half the spread away from the
mid-price $P^X$ and so the terminal cash position is
\begin{align}\label{eq:7}
\xi_T^X
& \set \xi_{0}-\int_{[0,T]} P^X_t \circ dX_t - \int_{[0,T]}
  \zeta^X_t \circ d(\Xup_t+\Xdo_t).
\end{align}

\begin{Remark}
  The above $\circ$-integrals are understood in the sense that for two
  RCLL processes $X,Y$ with $X$ of bounded variation, we let
  \begin{align}
    \label{eq:8}
    \int_{[0,T]} Y_t \circ dX_t \set \int_{[0,T]}
    \frac{1}{2}(Y_{t-}+Y_{t+}) dX_t,
  \end{align}
  where on the right-hand side we have a standard Lebesgue-integral
  with respect to the signed measure $dX$.  In~\eqref{eq:7}, this way
  of integrating accounts for the fact that, when buying assets in a
  bulk $\Delta \Xup_t>0$, both the mid-pricee $P^X$ and the
  half-spread $\zeta^X$ will increase only gradually during the order
  execution, letting the investor effectively trade at the average
  between pre- and post-transaction mid-price and the average between
  pre- and post-transaction half spread. We refer to~\cite{ObiWang:13}
  for similar considerations in an order execution
  framework. Alternatively, it is possible to consider
  $\int Y \circ dX$ as a Marcus integral for our controlled
  system. For our linear impact specification this amounts to the
  Stratonovich-like integral~\eqref{eq:7};
  cf. \cite{BechererBilarevFrentrup:17} and the references therein.
\end{Remark}

\section{Duality for super-replication of contingent claims}

Let us now consider the classical super-replication problem for a
cash-settled European contingent claim with $\cF_T$-measurable payoff
$H \geq 0$ at time $T \geq 0$ that is \emph{not} affected by the large
investor's trades, but exogenously given, for instance, as a
functional of the given unaffected price process $P$.
We will give a
dual description of such an exogenous payoff's super-replication costs
\begin{align}
  \label{eq:9}
  \pi(H) \set \inf\{\xi_{0} \in \RR \;:\; \xi^X_T \geq H \text{ for some
   } X \in \cX \text{ with } X_T=0\}.
\end{align}
\begin{Remark}
  We have to confine ourselves to claims whose payoff are not affected
  by the large investor because we have to preserve the convexity of
  the super-replication problem. Of course, pricing and hedging claims
  with payoffs that can be affected (or even manipulated) by the large
  investor is a practically (and in the aftermath possibly judicially)
  most relevant problem. But this would require a rather
  product-specific analysis and is thus beyond the scope of this
  duality paper. See, however, the PDE approaches in, e.g.,
  \cite{BouchardLoeperZou:17, BechererBilarev:18} as well as
  \cite{Jarr:94, BankBaum:04} for some results in this direction. Note
  also that for a vanilla option, whose payoff only depends on the
  terminal mid-price at time~$T$, the liquidation constraint $X_T=0$
  ensures that $P^X_T=P_T$ and thus prevents any manipulation
  possibilities, making our duality result below applicable to these
  products.
\end{Remark}

On the dual side of our description of super-replication costs, the
market frictions will be captured by the optional random measure $\mu$
that, under Assumption~\ref{asp:2}, is induced by the continuous
increasing process $-\kappa=-\delta/\rho^2$ on $(0,T)$ with point mass
$\kappa_T= \delta_T/\rho^2_T$ in $T$:
\begin{align}
  \label{eq:10}
  \mu(dt) \set 1_{(0,T)}(t) |d \kappa_t| + \kappa_T \mathrm{Dirac}_T(dt).
\end{align}
With this notation, we can now formulate our main result:
\begin{Theorem}\label{thm:1}
  Under Assumptions~\ref{asp:1} and~\ref{asp:2}, the super-replication
  costs~\eqref{eq:9} of a contingent claim $H \geq 0$ have the dual
  description
\begin{align}
  \label{eq:11}
  \mbox{}\!\!\!\!
\pi(H) = \sup_{(\QQ,M,\alpha)} \left\{\E_{\QQ}[H] - \frac{1}{2}
  \|\alpha-\zeta_0\|^2_{L^2(\QQ \otimes
  \mu)}-M_0x_0-\frac{1}{2}\iota x_0^2\right\}>-\infty
\end{align}
where the supremum is taken over all triples $(\QQ,M,\alpha)$ of
probability measures $\QQ \ll \PP$ on $\cF_T$, martingales
$M \in \cM^2(\QQ)$ and all optional
$\alpha \in L^2(\QQ \otimes \mu)$ which control the fluctuations of
$P$ in the sense that
\begin{align}
  \label{eq:12}
  |P_t-M_t| \leq \frac{\rho_t}{\delta_t} \E_\QQ\left[\int_{[t,T]} \alpha_u \,
  \mu(du)\middle|\cF_t\right], \quad 0 \leq  t \leq T.
\end{align}
\end{Theorem}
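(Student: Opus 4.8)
\medskip
\noindent\textbf{Proof plan.}

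The backbone of the argument is a rewriting of the terminal wealth~\eqref{eq:7} that makes its convexity manifest. Writing $A\set\Xup+\Xdo$ for the cumulative trading activity and solving the linear dynamics~\eqref{eq:3} explicitly gives $\rho_t\zeta^X_t=\zeta_0+\int_{[0,t]}(\rho_s/\delta_s)\,dA_s$, so that a Stratonovich/Marcus change of variables, the identity $\int_{[0,T]}X_t\circ dX_t=\tfrac12(X_T^2-x_0^2)$, and the liquidation constraint $X_T=0$ turn~\eqref{eq:7} into
\begin{align}
  \xi^X_T=\xi_0+\tfrac12\iota x_0^2-\int_{[0,T]}P_t\circ dX_t-\tfrac12\int_{(0,T]}\big((\rho_t\zeta^X_t)^2-\zeta_0^2\big)\,\mu(dt),
\end{align}
with $\mu$ the measure in~\eqref{eq:10}. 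Assumption~\ref{asp:2}, in particular~\eqref{eq:5}, is precisely what makes $\mu\ge0$, and since $\rho_t\zeta^X_t$ is affine in $dA$ the last term is a convex functional of $X$. This identity drives both inequalities, and it is also what yields $\pi(H)\ge\pi(0)>-\infty$ by a direct no-free-lunch estimate (the superlinear spread term dominates the linear $\int P\circ dX$).

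\emph{Lower bound} $\pi(H)\ge\sup_{(\QQ,M,\alpha)}\{\cdots\}$. Fix a feasible triple $(\QQ,M,\alpha)$ and $X\in\cX$ with $X_T=0$, $\xi^X_T\ge H$; restricting to strategies of finite cost (which, thanks to the superlinear spread term, controls $\|A_T\|_{L^2}$ and makes all stochastic integrals below true martingales, so that no admissibility hypothesis is needed) we take $\QQ$-expectations in the displayed identity. Integrating $\int M\circ dX$ by parts and using that $M$ is a continuous $\QQ$-martingale (continuous by Assumption~\ref{asp:1}, since $\QQ\ll\PP$) contributes $-M_0x_0$ plus a mean-zero term, so that $-\E_\QQ\int P\circ dX=M_0x_0+\E_\QQ\int(M-P)\circ dX$. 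Bounding $\int(M-P)\circ dX\le\int|P_t-M_t|\circ dA_t$, inserting~\eqref{eq:12}, and applying Fubini together with the optional projection theorem rewrites the resulting double integral as $\E_\QQ\int_{(0,T]}\alpha_u(\rho_u\zeta^X_u-\zeta_0)\,\mu(du)$. With $\beta_u\set\rho_u\zeta^X_u$, the elementary estimate $a(b-\zeta_0)-\tfrac12(b^2-\zeta_0^2)\le\tfrac12(a-\zeta_0)^2$ (i.e.\ $\tfrac12(a-b)^2\ge0$) then collapses everything to $\E_\QQ[H]\le\xi_0+\tfrac12\iota x_0^2+M_0x_0+\tfrac12\|\alpha-\zeta_0\|^2_{L^2(\QQ\otimes\mu)}$, which rearranges into the asserted bound; taking the infimum over super-replicating $X$ and the supremum over feasible triples finishes this direction.

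\emph{Upper bound} $\pi(H)\le\sup_{(\QQ,M,\alpha)}\{\cdots\}$, the substantial half, follows the three-lemma scheme announced in the introduction. Fix $c<\pi(H)$; then no $\xi^X_T$ with $X\in\cX$, $X_T=0$ and initial cash $\xi_0=c$ dominates $H$, so a Hahn--Banach separation in a suitable $L^\infty$--$L^1$ duality (Lemma~\ref{lem:1}; after truncating $H$ to a bounded claim and passing to the limit) produces $\QQ\ll\PP$ with $\E_\QQ[H]>\sup\{\E_\QQ[\xi^X_T]:X\in\cX,\ X_T=0,\ \xi_0=c\}$. By the displayed identity the right-hand side equals $c+\tfrac12\iota x_0^2+\Phi(\QQ)$ for a concave maximization $\Phi(\QQ)$ over $X$ subject to the linear constraint $X_T=0$. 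I then introduce the continuous $\QQ$-martingale $M$ as a Lagrange multiplier, adding $\E_\QQ[M_TX_T]$ to the objective to obtain an unconstrained concave problem over all of $\cX$ with value $\Psi(M)$, and establish that there is no gap and a minimizer exists, $\Phi(\QQ)=\min_M\{M_0x_0+\Psi(M)\}$ (Lemma~\ref{lem:2}). Finally (Lemma~\ref{lem:3}), for this $M$ I solve the unconstrained problem and read off from its optimal policy $X^\ast$ the optimal scaled spread $\alpha\set\rho\zeta^{X^\ast}$: the first-order conditions for $X^\ast$ --- the marginal revenue $M_t-P_t$ equals the forward-looking marginal spread cost where the investor trades and is dominated by it elsewhere --- are exactly~\eqref{eq:12}, and with $\alpha=\beta^\ast$ the inequalities from the lower-bound computation become equalities, so that $\Psi(M)=\tfrac12\|\alpha-\zeta_0\|^2_{L^2(\QQ\otimes\mu)}$. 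Hence $(\QQ,M,\alpha)$ is feasible with dual value $\E_\QQ[H]-(M_0x_0+\Psi(M))-\tfrac12\iota x_0^2=\E_\QQ[H]-\tfrac12\iota x_0^2-\Phi(\QQ)>c$, and letting $c\uparrow\pi(H)$ gives the claim; in particular the supremum is $>-\infty$.

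The step I expect to be the main obstacle is Lemma~\ref{lem:3}: constructing the optimal spread and recognizing its liquidity-weighted $L^2$-norm as the correct penalty. This rests on the stochastic representation theorem of~\cite{BankElKaroui:04}, which so far has only been applied to one-sided increasing controls, whereas here $X$ has two-sided bounded variation and one must handle buying $\Xup$ and selling $\Xdo$ simultaneously through their common state $\zeta^X$; the monotonicity condition~\eqref{eq:5} is precisely what keeps this two-sided problem convex and makes the representation applicable. Checking that the representation's base process delivers an $\alpha\in L^2(\QQ\otimes\mu)$ obeying~\eqref{eq:12} and that the associated control lies in $\cX$ with the integrability used above are the technical points requiring the most care.
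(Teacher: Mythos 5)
Your plan reproduces the paper's architecture essentially step by step: the convex rewriting of the terminal wealth with the quadratic spread term against the measure $\mu$ (the paper's Lemma~\ref{lem:1}), the lower bound via integration by parts with the martingale $M$, Fubini plus optional projection and completing the square, and the upper bound via a separation argument producing $\hat{\QQ}$, a martingale Lagrange multiplier for the constraint $X_T=0$, and the Bank--El Karoui representation applied to $|P-\hat{M}|\delta/\rho$ to produce $\alpha$ and identify the liquidity-weighted $L^2$-penalty (Lemmas~\ref{lem:3}--\ref{lem:5}). Your direct primal argument for finiteness (the quadratic spread cost dominating the linear term pathwise, so that $v_0\geq\Lambda^X_T$ a.s.\ forces a finite lower bound on $\xi_0$) is a legitimate alternative to the paper's construction of an explicit feasible triple $(\QQ^0,0,\alpha^0)$, and your heuristic for the flat-off/first-order conditions matches the paper's shadow-price reading of~\eqref{eq:12}.

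There are, however, two genuine gaps in your upper bound. First, the Hahn--Banach separation needs the set $\cC$ of outcomes attainable from zero cash to be convex \emph{and closed}; the paper spends most of its separation lemma proving closedness in $L^0$ via the coercivity estimate~\eqref{eq:40}, a Koml\'os-type argument and the lower semicontinuity~\eqref{eq:41} of $X\mapsto\Lambda^X_T$, none of which appear in your plan. Your device of ``truncating $H$ and passing to the limit'' does not substitute for this: knowing the duality for $H\wedge n$ does not yield $\pi(H\wedge n)\uparrow\pi(H)$ without exactly that closedness/compactness machinery, and separating the truncated claim directly may be impossible since $\hat{\xi}_0<\pi(H)$ does not preclude $\hat{\xi}_0\geq\pi(H\wedge n)$ for every $n$; the paper instead handles unbounded $H$ by switching to an equivalent measure under which $H\in L^1$ and $\sup_t|P_t|\in L^6$, cf.~\eqref{eq:59}. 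Second, you assert that the Lagrange dual over martingales attains a minimizer and then construct $\alpha$ from ``the'' optimal $M$. Attainment is neither proved in the paper nor needed: the minimax identity is obtained by penalizing $X_T$ in $L^2$ and applying Sion's theorem on $L^2$-balls, and the strict inequality coming from the separation leaves enough slack to work with a merely near-optimal $\hat{M}$, for which the representation-theorem construction of $\hat{\alpha}$ and the evaluation of the unconstrained infimum go through unchanged. As stated, your reliance on an exact minimizer is unjustified, though repairable by this substitution. (Minor bookkeeping: your $\Psi(M)$ drops the constant $\tfrac12\zeta_0^2\delta_0$ that cancels against $v_0$, and in the lower bound the $L^2(\QQ)$-control of the total variation requires $\sup_t|P_t|\in L^2(\QQ)$, which comes from the constraint~\eqref{eq:12} together with $\alpha\in L^2(\QQ\otimes\mu)$ and Doob's inequality, not from ``finite cost'' alone.)
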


\subsection{Comparison with other super-replication duality formulae}

Let us discuss the above duality formula for super-replication prices
by comparing it with other such dualities obtained in different
financial models.

First, the supremum on the right-hand side of~\eqref{eq:11} includes
all measures $\QQ \ll \P$ for which $P$ is a square-integrable
martingale (if there are any). For these one can choose $M=P$ and
$\alpha=\zeta_0$ to satisfy the constraint~\eqref{eq:12} and obtain
that $\pi(H) \geq \E_{\QQ}[H]-x_0P_0$ when ignoring permanent impact
($\iota=0$). This inequality is clearly in line with the classical
frictionless super-replication duality. (Notice that the value of the
initial position $x_0P_0$ is subtracted here because
$\pi(H)$ in~\eqref{eq:10} describes the super-replication costs in
cash required when starting with a position of $x_0$ in the risky
asset.)

Let us next turn to models with transaction costs arising from a fixed
spread. Adjusting the multiplicative settings considered in
\cite{CvitKarat:96, CampiSchachermayer:06, CzichowskySchachermayer:16} to an
additive one as considered here leads  to consistent price
systems given by $\P$-martingales $(Z^0,Z^1)$ with $Z^0_0=1$,
$Z^0_T>0$ such that the distance of $M \set Z^1/Z^0$ to $P$ is
dominated by the (constant for simplicity) half-spread $\lambda$ which
one has to pay on top of $P$ when buying and which is subtracted from
the proceeds when selling a unit of the risky asset. One can then
define $\QQ$ by $d\QQ/d\PP \set Z^0_T$ and put $\alpha \set \lambda$
to obtain a triple $(\QQ,M,\alpha)$ as required by our duality
formula, e.g., in any model with zero resilience ($r=0$, $\rho=1$) and initial
spread $\zeta_0 = \lambda$. Indeed, \eqref{eq:12} does hold for any
market depth $\delta>0$ (which has to be decreasing to meet
Assumption~\ref{asp:2}) since then
\begin{align}
  \label{eq:13}
  \frac{\rho_t}{\delta_t} \E_\QQ\left[\int_{[t,T]} \alpha_u \,
  \mu(du)\middle|\cF_t\right] = \frac{\rho_t}{\delta_t} \lambda
  \E_\QQ\left[\mu([t,T])\middle|\cF_t\right] = \frac{\rho_t}{\delta_t}
  \lambda \kappa_t = \lambda.
\end{align}
As a result, ignoring possible permanent impact ($\iota=0$),
$\pi(H) \geq \E_{\QQ}[H]-Z^1_0x_0$, in line with the super-replication
results for models with fixed spread.

Observe that, contrary to these models, our setting with spread impact
does not require any notion of admissibility for trading
strategies. Also, in our model we have, regardless of the initial
position $x_0$, that $\pi(0)>-\infty$ for \emph{any} choice of
(continuous) price process $P$. Hence, even for specifications
allowing for the most egregious arbitrage in a fixed-spread model (let
alone in a frictionless one), there is no way to reach zero terminal
wealth from arbitrarily low initial cash positions. This is due to the
fact that scaling favorable strategies ultimately turns these
unfavorable as transaction costs effectively grow quadratically when
scaling a strategy, not just linearly as in any setting with a fixed
spread. This effect has been observed in an Almgren-Chriss
\cite{AlmgrenChriss:01}-style model with temporary rather than
transient market impact in \cite{GuasoniRasonyi:15}.  Like our
super-replication cost formula, theirs takes the form of a convex risk
measure rather than a coherent one as found for the fixed spread
models. This is again due to the nonlinear scaling of transaction
costs.

%\todo[inline]{existence of optimal super-replication strategies}

\subsection{Applications}

To illustrate the usefulness of the above duality result let us derive
in this section the super-replication costs of a call option and show
how to verify optimality of a proposed investment strategy.

\subsubsection{Super-replicating call options}\label{sec:CallOptions}

As a first application of our super-replication duality, let us verify
that also in our model with strategy-depen\-dent spread, buy-and-hold is
the best way to super-replicate a call option
\begin{align}
  \label{eq:14}
   H = (P_T-k)^+ \quad \text{with} \quad k \geq 0,
\end{align}
at least if liquidity coefficients are deterministic and if the
unaffected price $P$ satisfies the conditional full-support property
(see \cite{GRS:08})
\begin{align}
  \label{eq:15}
  \supp \P\left[(P_u)_{t \leq u \leq T} \in \cdot \middle|\cF_t\right] = C_{P_t}([t,T],\RR_+),
  \quad 0 \leq t \leq T,
\end{align}
where, for $p \geq 0$, $C_p([t,T],\RR_+)$ denotes the class of
continuous, nonnegative functions $f$ on $[t,T]$ with $f(t)=p$.

\begin{Corollary}\label{cor:1}
  Let Assumption~\ref{asp:1} hold true and let market depth and
  resilience be deterministic and satisfy Assumption~\ref{asp:2}. In
  addition, suppose $P$ is strictly positive with the conditional full
  support property~\eqref{eq:15}.  Then, for an investor with initial
  position $x_0 \leq 1$, the super-replication cost of a cash-settled
  call option is
 \begin{align}
   \label{eq:16}
    \pi((P_T-k)^+) &= P_0(1-x_0) -\frac{1}{2}\iota x^2_0+\zeta_0(1-x_0)+
\frac{(1-x_0)^2}{2\delta_0}\\&\qquad+\frac{\zeta_0 +(1-x_0)/\delta_0}{\rho_T}+\frac{1}{2\delta_T}
 \end{align}
 and it is attained by holding one unit of the risky asset over
 $[0,T)$ to be sold at time $T$.
\end{Corollary}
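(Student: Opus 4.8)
The plan is to obtain the two inequalities $\pi((P_T-k)^+) \leq \text{RHS}$ and $\pi((P_T-k)^+) \geq \text{RHS}$ separately, the first by exhibiting the buy-and-hold strategy and the second by producing a dual triple $(\QQ,M,\alpha)$ that saturates the bound in Theorem~\ref{thm:1}.

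For the upper bound, I would simply evaluate the cost of the proposed strategy: start with $x_0$ units, buy $1-x_0$ units at time $0$ (this is a bulk trade $\Delta X^\uparrow_0 = 1-x_0 \geq 0$, using $x_0 \leq 1$), hold until $T$, then liquidate by selling one unit at $T$ (a bulk $\Delta X^\downarrow_T = 1$). I compute $\zeta^X$ along this path from~\eqref{eq:3}: it jumps to $\zeta_0 + (1-x_0)/\delta_0$ at time $0$, then decays like $\rho_t^{-1}$ times that value until $T$, and the final sale adds $1/\delta_T$ to the half-spread used for that sale. Plugging into~\eqref{eq:7} with the $\circ$-integral convention~\eqref{eq:8} (so the buy at $0$ is executed at the average mid-price $P_0 + \tfrac12\iota(1-x_0)$ and average half-spread $\zeta_0 + \tfrac12(1-x_0)/\delta_0$, and similarly for the sale at $T$), and using $P^X_T = P_T$ after liquidation, I get that this strategy super-replicates $(P_T-k)^+$ at cost exactly the right-hand side of~\eqref{eq:16}. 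Since $P>0$, holding one unit of the asset and selling it at $T$ yields $P_T \geq (P_T-k)^+$, so the terminal constraint is met. This part is a routine but careful bookkeeping calculation.

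For the matching lower bound, I need a dual triple. Because the liquidity coefficients $\delta,r$ are deterministic, $\mu$ is a deterministic measure and $\rho_t/\delta_t$ is deterministic. The natural guess is to take $M = P$ — but $P$ need not be a $\QQ$-martingale, so instead I exploit the conditional full-support property~\eqref{eq:15}: following the approach in \cite{GRS:08}, for any $\varepsilon>0$ one constructs a measure $\QQ_\varepsilon \ll \PP$ under which $P$ stays within $\varepsilon$ (in sup-norm) of a $\QQ_\varepsilon$-martingale $M_\varepsilon$ that is itself close to $P_0$; more precisely one finds $M_\varepsilon \in \cM^2(\QQ_\varepsilon)$ with $\|M_\varepsilon - P\|_\infty \leq \varepsilon$ and $\E_{\QQ_\varepsilon}[(P_T-k)^+]$ close to $P_0$ (using that $P_0 = \E_{\QQ_\varepsilon}[M_{\varepsilon,T}] \geq \E_{\QQ_\varepsilon}[(M_{\varepsilon,T}-k)^+] - \varepsilon \geq \E_{\QQ_\varepsilon}[(P_T-k)^+] - 2\varepsilon$, while also $\E_{\QQ_\varepsilon}[(P_T-k)^+]$ can be made $\geq P_0 - \delta(\varepsilon)$ by placing mass near paths that rise high). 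Then I choose $\alpha = \alpha_\varepsilon$ to be the deterministic function making~\eqref{eq:12} tight: set $\beta_t \set (\delta_t/\rho_t)(|P_t-M_{\varepsilon,t}| + \text{slack})$ as the target for $\E_{\QQ_\varepsilon}[\int_{[t,T]}\alpha_u\,\mu(du)|\cF_t]$ and differentiate in $t$ to read off $\alpha_\varepsilon$; since everything is deterministic this is an explicit computation, and the $L^2(\QQ_\varepsilon\otimes\mu)$-norm of $\alpha_\varepsilon - \zeta_0$ converges as $\varepsilon\to 0$ to the value $\tfrac12\|\alpha_*-\zeta_0\|^2$ with $\alpha_*$ corresponding to $M=P$, $|P-M|\equiv 0$, i.e. $\alpha_* \equiv 0$ except for the structure forced by~\eqref{eq:12} reducing to $\alpha_{*,t} = 0$ — so the penalty term becomes $\tfrac12\|\zeta_0\|^2_{L^2(\mu)} = \tfrac12\zeta_0^2(\kappa_0)$... at which point one checks this matches the remaining terms in~\eqref{eq:16} after subtracting $M_0 x_0 + \tfrac12\iota x_0^2 = P_0 x_0 + \tfrac12\iota x_0^2$. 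Letting $\varepsilon \to 0$ gives $\pi((P_T-k)^+) \geq \text{RHS}$.

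The main obstacle is the lower bound: one must verify that the \emph{particular} deterministic $\alpha_\varepsilon$ forced by the constraint~\eqref{eq:12}, when plugged into the quadratic penalty $\tfrac12\|\alpha-\zeta_0\|^2_{L^2(\QQ\otimes\mu)}$, produces exactly the algebraic expression on the right of~\eqref{eq:16} — in particular the cross terms $\zeta_0(1-x_0)$, the $(1-x_0)^2/(2\delta_0)$ term, and the $\rho_T^{-1}$ and $\delta_T^{-1}$ contributions must all emerge from this single integral together with $M_0 x_0$. Getting the constraint~\eqref{eq:12} to be an equality (rather than a strict inequality, which would waste penalty budget) for the optimal $\alpha$ is the conceptual heart, and it mirrors the tightness of the spread dynamics along the optimal buy-and-hold path; reconciling the two sides is where the full-support hypothesis and the deterministic structure do their work. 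I expect the verification that primal and dual values coincide to reduce, after simplification, to an identity already implicit in the proof of Theorem~\ref{thm:1} (Lemmas~\ref{lem:2}--\ref{lem:3}), so the cleanest route may be to feed the buy-and-hold path directly into the representation used there rather than re-deriving $\alpha_\varepsilon$ from scratch.
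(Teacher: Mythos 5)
Your upper bound (costing out the buy-and-hold strategy via \eqref{eq:7}) is exactly the paper's first step and is fine. The genuine gap is in the lower bound: the dual triple you propose cannot reach the right-hand side of \eqref{eq:16}. If you take $M_\varepsilon$ uniformly close to $P$ and let the penalty degenerate (your $\alpha_*$ discussion ends with the penalty $\tfrac12\zeta_0^2\delta_0$, which is a \emph{cost}, not a gain; with the cheapest admissible choice $\alpha\equiv\zeta_0$ the penalty is $0$), then the best you can extract from Theorem~\ref{thm:1} is roughly $\E_{\QQ_\varepsilon}[(P_T-k)^+]-P_0x_0-\tfrac12\iota x_0^2\le P_0(1-x_0)-\tfrac12\iota x_0^2$, since under a measure making $P$ (nearly) a martingale one always has $\E_{\QQ}[(P_T-k)^+]\le\E_{\QQ}[P_T]\approx P_0$. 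This misses every liquidity term $\zeta_0(1-x_0)+\tfrac{(1-x_0)^2}{2\delta_0}+\tfrac{\zeta_0+(1-x_0)/\delta_0}{\rho_T}+\tfrac1{2\delta_T}>0$, and no amount of "reading off $\alpha_\varepsilon$ by differentiating the constraint" can create them: in the dual value these terms must come from $\E_\QQ[H]$ exceeding $M_0$ by more than the penalty, which is impossible when $M\approx P$.

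The paper's construction supplies precisely the missing idea. One fixes the nontrivial $\alpha_t=\zeta_0+\tfrac{1-x_0}{\delta_0}+\tfrac{\rho_T}{\delta_T}1_{\{T\}}(t)$ (mirroring the spread the buy-and-hold strategy itself creates), introduces a non-increasing deterministic shift $g$ with $g_0=\int_{[0,T]}\alpha\,d\mu/\delta_0>0$ and $g_T=-\alpha_T/\rho_T<0$, and uses the conditional full support property (Lemma~\ref{lem.new}) to build $\QQ\ll\P$ and $M\in\cM^2(\QQ)$ with $|g+P-M|$ small \emph{and} $\QQ(P_T>\varepsilon)<\varepsilon$. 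Thus $M_0\approx P_0+g_0$ sits \emph{above} $P_0$ and $M_T\approx P_T+g_T$ sits below $P_T$, and the martingale property plus the inequality $P_T\le(P_T-k)^++\varepsilon+k1_{\{P_T>\varepsilon\}}$ turn the call into (essentially) the forward $P_T$, so that $\E_\QQ[H]\gtrsim \alpha_T/\rho_T+P_0+g_0$; after subtracting $M_0x_0$, the penalty $\tfrac12\|\alpha-\zeta_0\|^2_{L^2(\QQ\otimes\mu)}$ and $\tfrac12\iota x_0^2$, exactly the extra terms of \eqref{eq:16} appear. Note also that your heuristic of "placing mass near paths that rise high" to push $\E_{\QQ}[(P_T-k)^+]$ toward $P_0$ is the opposite of what is needed: the measure is chosen so that $P_T$ is \emph{small} with probability close to one (with rare large excursions preserving the near-martingale property of $P+g$), which is what makes the strike costless in the limit. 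Your closing suggestion to "feed the buy-and-hold path into the representation of Lemmas~\ref{lem:2}--\ref{lem:3}" gestures at this but, as stated, is not a proof and does not identify either $g$, the concrete $\alpha$, or the concentration requirement on $\QQ$.
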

\begin{proof}
  Let us consider the strategy that immediately takes its position in
  the risky asset to one unit and keeps it there until unwinding it in
  the end:
  \begin{align}
    \label{eq:17}
    \hat{X}^\uparrow \set (1-x_0)  1_{[0,T]}, \hat{X}^\downarrow \set
   1_{\{T\}}, \hat{X}=1_{[0,T)}.
  \end{align}
  When starting with the cash position $\xi_0$ given by the right-hand
  side of~\eqref{eq:16} this leads by~\eqref{eq:7} to the terminal wealth
  \begin{align}
    \label{eq:18}
    \xi^{\hat{X}}_T = P_T \geq (P_T-k)^+ = H.
  \end{align}
  Here, the estimate holds true as $P$ is nonnegative. So the
  right-hand side of~\eqref{eq:16} is sufficient initial cash to
  super-replicate the call.

  We will use our duality formula from Theorem~\ref{thm:1} to show
  that $\epsilon>0$ less than this amount is not sufficient. To this end, we choose
  \begin{align}
    \label{eq:19}
    \alpha_t \set
    \zeta_0+\frac{1-x_0}{\delta_0}+\frac{\rho_T}{\delta_T}1_{\{T\}}(t),
    \quad 0 \leq t  \leq T,.
  \end{align}
  Clearly, there exists a Lipschitz continuous, non-increasing deterministic function
  $g:[0,T]\rightarrow\mathbb R$ with
  \begin{align}
    \label{eq:20}
  g_0&=\frac{\int_{[0,T]} \alpha_u \,  \mu(du)}{\delta_0},  \\
  g_T&=-\frac{\alpha_T}{\rho_T},\\
 |g_t|&\leq\frac{\rho_t}{\delta_t}\int_{[t,T]} \alpha_u \mu(du),
    \quad 0 \leq t \leq T.
  \end{align}
  Lemma~\ref{lem.new} below yields a probability measure
  $\QQ\ll\mathbb P$ with $\mathbb Q(P_T>\epsilon)< \epsilon$ and a
  square integrable $\QQ$-martingale $M$ such that
  \begin{align}
  &|g_t+P_t-M_t|<\epsilon\inf_{0\leq t\leq T}\frac{\rho_t}{\delta_t}\mu([t,T]), \  0 \leq t\leq T.
  \end{align}
  Hence, the triple $(\mathbb Q,M,\alpha+\epsilon)$ is as
  requested by our Duality  Theorem~\ref{thm:1}. Using the simple inequality
  $P_T\leq (P_T-k)^{+}+\epsilon+k1_{\{P_T>\epsilon\}}$, we thus obtain
 \begin{align}
 \pi(H)\geq& \mathbb E_{\mathbb Q}[P_T-M_T]+M_0-\mathbb E_{\mathbb Q}[P_T-(P_T-k)^{+}]\\
 &\quad -\frac{1}{2}  \|\alpha-\zeta_0\|^2_{L^2(\QQ \otimes  \mu)}-M_0x_0-\frac{1}{2}\iota x_0^2\\
  \geq&\frac{\alpha_T}{\rho_T}+(1-x_0)P_0+(1-x_0)\frac{\int_{[0,T]} \alpha_u \,
  \mu(du)}{\delta_0}\\
  &\quad -\frac{1}{2}\int_{[0,T]} |\alpha_u-\zeta_0|^2d\mu(u)-
  \frac{1}{2}\iota x_0^2-O(\epsilon).
 \end{align}
The result follows by using $\mu([0,T])=\delta_0$ and taking $\epsilon\downarrow 0$.
\end{proof}
\begin{Lemma}\label{lem.new}
  Suppose $P>0$ exhibits the conditional full support
  property~\eqref{eq:15} and let $g:[0,T]\rightarrow\mathbb R$ be
  Lipschitz-continuous and non-increasing. Then, for any $\epsilon>0$,
  there is a probability measure $\QQ \ll \PP$ and a
  square-integrable $\QQ$-martingale $M$ such that $\QQ$-almost surely
 \begin{align}
   \label{eq:21}
   |g_t+P_t-M_t|<\epsilon,  \quad  0\leq t\leq T,
 \end{align}
 and
 \begin{align}
   \label{eq:21+}
   \mathbb Q(P_T>\epsilon)< \epsilon.
 \end{align}
\end{Lemma}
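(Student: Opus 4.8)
The plan is to build $\QQ$ and $M$ by perturbing the measure $\P$ so that the path of $P$ is pushed close to the deterministic ``target'' path $-g$, after which the martingale $M$ can be taken essentially equal to $g+P$ up to a small correction. Concretely, I would first use the conditional full support property~\eqref{eq:15} to argue that, under a suitable equivalent change of measure, $P$ can be made to stay within distance $\epsilon/2$ of the fixed continuous function $p_0-g_t+g_0$ (note $g$ is deterministic, so $-g+P$ started from $P_0$ is a continuous path that, by~\eqref{eq:15}, can be steered into any $\epsilon$-tube) on all of $[0,T]$ with probability one under the new measure; here I also want the terminal value $P_T$ to be small, which is compatible with steering the path near a function that ends near $0$. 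This is precisely the content one expects from the conditional-full-support toolkit as used in \cite{GRS:08}: one can find $\QQ\ll\P$ whose support is concentrated on an arbitrarily thin tube around any prescribed continuous function, in particular a tube forcing both~\eqref{eq:21} (for the right choice of target) and~\eqref{eq:21+}.

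Second, having localized the law of the path, I would define $M$ as the $\QQ$-martingale closing the relevant process: set $N_t \set g_t + P_t$ and let $M_t \set \E_\QQ[N_T \mid \cF_t]$, or better, work with a mollified version to ensure square-integrability. Because $g$ is Lipschitz (hence of bounded variation) and deterministic, the only genuinely stochastic part of $N$ is $P$ itself, so $M_t - N_t = \E_\QQ[N_T - N_t\mid\cF_t]$, and on the tube $|N_T - N_t| = |g_T - g_t + P_T - P_t|$ is controlled: the $g$ increment is at most $\mathrm{Lip}(g)\cdot|T-t|$ and, more importantly, can be absorbed because $g$ is \emph{non-increasing}, while the $P$ increment is small on the tube. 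The subtlety is that this crude bound on $|g_T-g_t|$ is not itself $<\epsilon$, so one cannot simply close $N$ at $T$; instead I would choose the target path for $P$ to be $p_0 - (g_t - g_0) + (\text{small ramp})$ so that $g_t + P_t$ is itself within $\epsilon$ of a \emph{constant}, and then $M$ taken as the $\QQ$-expectation of that nearly-constant terminal value automatically stays within $O(\epsilon)$ of $g_t+P_t$ for all $t$. Rescaling $\epsilon$ at the outset handles the $O(\epsilon)$ slack.

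Third, I would verify square-integrability of $M$: since on the support of $\QQ$ the terminal value $N_T = g_T + P_T$ is bounded (the tube is bounded and $g$ is continuous on the compact $[0,T]$), $M$ is a bounded martingale, hence in $\cM^2(\QQ)$; and $\QQ\ll\P$ on $\cF_T$ holds by construction of the change of measure. Finally~\eqref{eq:21+} is read off from the fact that the tube around the target forces $P_T$ to be near $g_0 - g_T$ shifted appropriately; by choosing the target's terminal height below $\epsilon$ (which we may, subtracting a constant from $g$ changes nothing since only increments of $g$ and the tube matter, and we can re-center), we get $\QQ(P_T > \epsilon)$ as small as we like — in fact we can make it literally $0$, which is stronger than needed.

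The main obstacle is the interplay between the fixed (non-vanishing) total variation of $g$ and the requirement that $g_t + P_t - M_t$ be uniformly small: one must exploit that $g$ is \emph{deterministic and monotone} to reduce the problem to steering $P$ into a thin tube around an explicit continuous function rather than trying to control a genuinely random target, and one must be careful that the conditional full support property~\eqref{eq:15}, which is a statement about conditional laws, indeed yields a \emph{single} absolutely continuous measure under which the unconditional path lies in the tube almost surely — this is the standard but slightly delicate step, handled by a Girsanov-type or explicit likelihood-ratio construction exactly as in the conditional-full-support literature, and I would cite \cite{GRS:08} for the precise mechanism.
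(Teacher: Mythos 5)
Your construction cannot deliver both \eqref{eq:21} and \eqref{eq:21+} at the same time, and the obstruction is quantitative, not technical. If $M$ is a $\QQ$-martingale with $|g_t+P_t-M_t|<\epsilon$ for all $t$, then evaluating at $t=0$ and $t=T$ and taking $\QQ$-expectations gives $\E_\QQ[P_T]\geq P_0+(g_0-g_T)-2\epsilon\geq P_0-2\epsilon$, since $g$ is non-increasing. So the \emph{mean} of $P_T$ under any admissible $\QQ$ is pinned near (or above) $P_0$; in particular your parenthetical claim that one could even force $\QQ(P_T>\epsilon)=0$ is impossible once $P_0>3\epsilon$. Your plan confines $P$ to a thin tube around a deterministic target, and, as you correctly note, a martingale uniformly close to a deterministic function forces that function to be nearly constant, so the target must be $f_t\approx P_0+g_0-g_t$, whence $P_T\approx f_T\geq P_0$ \emph{deterministically} under your $\QQ$. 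Then $\QQ(P_T>\epsilon)=1$, and \eqref{eq:21+} fails. Re-centering $g$ by a constant does not help: it shifts $M$ but leaves both the increments $g_0-g_T$ and the value of $P_T$ unchanged. The only way to reconcile $\E_\QQ[P_T]\gtrsim P_0$ with $\QQ(P_T>\epsilon)<\epsilon$ is to make $P_T$ highly skewed under $\QQ$ — small with probability close to one, very large on a rare event — which is exactly what a deterministic tube rules out.

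This is the idea your proposal is missing and the one the paper's proof is built around: instead of steering $P$ onto a path, one uses the conditional full support property on a time grid to construct, via a product of conditional density ratios, a measure $\QQ^{N,\sigma}$ under which $P+g$ makes symmetric up/down moves of size proportional to $\sigma/\sqrt{N}$ times the current price level. This makes $g+P$ approximately a discrete martingale (so $M_t:=\E_{\QQ^{N,\sigma}}[\tilde M_N\mid\cF_t]$ is a bounded, hence square-integrable, martingale satisfying \eqref{eq:21}), while the multiplicative structure means the terminal law converges (by the Duffie--Protter invariance result) to that of a geometric-Brownian-type process $Z^{(\sigma)}_1$ with volatility $\sigma$; since $Z^{(\sigma)}_1\to 0$ in probability as $\sigma\uparrow\infty$ even though its mean does not vanish, choosing $\sigma$ large and then $N$ large yields \eqref{eq:21+}. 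Your first and third steps (getting a single $\QQ\ll\P$ from the conditional laws, and boundedness/square-integrability of $M$) are fine in spirit, but the core of the lemma is precisely the large-volatility randomization that your deterministic-tube scheme removes, so the argument as proposed would fail at \eqref{eq:21+}.
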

\begin{proof}
  Without loss of generality, we can assume that $T=1$ and fix
  $0<\epsilon<1$. It will be convenient to denote increments of a
  given process $(X_t)$ by
  $\Delta^N_n X \set X_{\frac{n}{N}}-X_{\frac{n-1}{N}}$ where
  $N \in \NN$ and $n=1,\dots,N$. For such $n, N$ and for $\sigma>0$,
  consider the disjoint events $A^{N,\sigma}_{+,n}$ and
  $A^{N,\sigma}_{-,n}$ given by
\begin{align}
  \label{eq:22}
  A^{N,\sigma}_{\pm,n} \set & \left\{\Delta^N_{n} (P+g) =\pm
                             (P_{\frac{n-1}{N}} \wedge N^{1/4}) \frac{\sigma}{\sqrt{N}}+o
                              \text{ for some } o
                              \in [0,1/N^2]\right\}\\&\quad \cap
                              \left\{\max_{\frac{n-1}{N} \leq t \leq
                                                       \frac{n}{N}}|P_t-P_{\frac{n-1}{N}}|
                                                       \leq \epsilon/3\right\}.
\end{align}
For $N>(6\sigma/\epsilon)^4$ (as assumed henceforth), the path
properties described for $P$ in the definition of both
$A^{N,\sigma}_{+,n}$ and $A^{N,\sigma}_{-,n}$ are met by non-empty
open subsets of $C_{P_{\frac{n-1}{N}}}([\frac{n-1}{N},1],\RR_+)$ for
any given $P_{\frac{n-1}N}>0$. Here, insisting on nonnegative paths is
possible because $g$ is assumed to be non-increasing (whence
$\Delta^N_n g \leq 0$). It thus follows from the conditional full support
property~\eqref{eq:15} that
 \begin{align}
   \label{eq:23}
   \P\left[A^{N,\sigma}_{+,n} \middle| \cF_{\frac{n-1}{N}}\right]>0
   \quad \text{and} \quad    \P\left[A^{N,\sigma}_{-,n,} \middle|
   \cF_{\frac{n-1}{N}}\right]>0, \quad n=1,\dots,N.
 \end{align}
 So  there is $\QQ^{N,\sigma} \ll \PP$ for which 
 \begin{align}
   \label{eq:24}
   \QQ^{N,\sigma}\left[A^{N,\sigma}_{+,n} \middle| \cF_{\frac{n-1}{N}}\right]=
   \QQ^{N,\sigma}\left[A^{N,\sigma}_{-,n} \middle| \cF_{\frac{n-1}{N}}\right] =
   \frac{1}{2}, \quad n=1,\dots,N;
 \end{align}
 for instance $\QQ^{N,\sigma}$ with density
\begin{align}
  \label{eq:25}
  \frac{d\QQ^{N,\sigma}}{d\PP} \set \prod_{n=1,\dots,N} \frac{1}{2}\left(\frac{1_{A^{N,\sigma}_{+,n}}}{\P\left[A^{N,\sigma}_{+,n} \middle| \cF_{\frac{n-1}{N}}\right]}+\frac{1_{A^{N,\sigma}_{-,n}}}{\P\left[A^{N,\sigma}_{-,n} \middle| \cF_{\frac{n-1}{N}}\right]}\right)
\end{align}
will do. In conjunction with the definition of $A^{N,\sigma}_{\pm,n}$,
this ensures that $\QQ^{N,\sigma}$-a.s.
 \begin{align}
   \label{eq:26}
   \left|\E_{\QQ^{N,\sigma}}\left[\Delta^N_{n}(P+g)\middle|\cF_{\frac{n-1}{N}}\right]\right|
   \leq \frac{1}{N^2}, \quad n=1,\dots,N,
 \end{align}
 and, thus, the auxiliary discrete-time martingale
 \begin{align}
   \label{eq:27}
   \tilde{M}_n \set P_0+g_0+\sum_{m=1}^n
   \left(P_{\frac{m}{N}}-\E_{\QQ^{N,\sigma}}\left[P_{\frac{m}{N}}\middle|\cF_{\frac{m-1}{N}}\right]\right),
   \quad n=0,\dots,N,
 \end{align}
 satisfies $\QQ^{N,\sigma}$-a.s.
 \begin{align}
   \label{eq:28}
   \left|P_{\frac{n}{N}}+g_{\frac{n}{N}}-\tilde{M}_n\right| \leq
   \frac{1}{N}, \quad n=0,\dots,N.
 \end{align}
 Combining this with the Lipschitz-continuity of $g$ and the
 $\epsilon/3$-bound on the fluctuations of $P$ over any time interval
 of length $\frac{1}{N}$  from the definition of $A^{N,\sigma}_{\pm,n}$
 yields
   \begin{align}
     \label{eq:29}
     \left|\tilde{M}_n-\tilde{M}_{n-1}\right|\leq \frac{\epsilon}{2},
     \quad n=1,\dots,N, \quad \QQ^{N,\sigma}\text{-a.s.}
   \end{align}
for $N>N_0$, where $N_0(\sigma)$ depends only on $\sigma$, $\epsilon$, and the Lipschitz
constant~$L$ of~$g$.

We conclude that the bounded $\QQ^{N,\sigma}$-martingale given by
\begin{align}
  \label{eq:30}
  M^{N,\sigma}_t \set
  \E_{\QQ^{N,\sigma}}\left[\tilde{M}_N\middle|\cF_{t}\right], \quad 0
  \leq t \leq T,
\end{align}
 satisfies $M^{N,\sigma}_{\frac{n}{N}}=\tilde M_n$, $n=0,\dots,N$,  and
 \begin{align}
   \label{eq:31}
 \max_{n=1,\dots,N}\max_{\frac{n-1}{N}\leq t\leq\frac{n}{N}}|M^{N,\sigma}_t-M^{N,\sigma}_{\frac{n-1}{N}}|
\leq \frac{\epsilon}{2} \quad \QQ^{N,\sigma}\text{-a.s.}  
 \end{align}
 This together with~\eqref{eq:28} and the $\epsilon/3$-bound on the
 fluctuations of $P$ from the definition of $A^{N,\sigma}_{\pm,n}$
 gives that $g+P-M^{N,\sigma}$ satisfies the required bound~\eqref{eq:21}
 $\QQ^{N,\sigma}$-a.s.\  for $N>N_0(\sigma)$.

 It remains to argue that $\sigma$ and then $N>N_0(\sigma)$ can be
 chosen such that $\QQ\set\QQ^{N,\sigma}$ from the above construction
 also satisfies the second requirement
 $\QQ(P_1>\epsilon)<\epsilon$. To this end, note that the difference
 equation
 \begin{align}
   \label{eq:32}
   Z^{N,\sigma}_0 &\set P_0, \\ 
     \Delta^N_{n} Z^{N,\sigma} &\set (Z^{N,\sigma}_{\frac{n-1}{N}}
                                 \wedge N^{1/4})
                                          \frac{\sigma}{\sqrt{N}}\left(1_{A^{N,\sigma}_{+,n}}-1_{A^{N,\sigma}_{-,n}}\right)+\frac{L+1}{N},
                                          \quad n=1,\dots,N,
 \end{align}
 yields a process $Z^{N,\sigma}$ dominating $P$ in the sense that
 $Z^{N,\sigma}_{\frac{n}{N}} \geq P_{\frac{n}{N}}$, $n=0,\dots,N$,
 $\QQ^{N,\sigma}$-a.s., as follows readily by induction using the
 definition of $A^{N,\sigma}_{\pm,n}$ and the Lipschitz continuity of
 $g$. Theorem~4.4 in \cite{DuffieProtter:92} in conjunction
 with~\eqref{eq:28} yields that, as $N \uparrow \infty$, the
 distribution of $Z^{N,\sigma}_1$ under $\QQ^{N,\sigma}$ converges to
 the distribution of $Z^{(\sigma)}_1$ where $Z^{(\sigma)}$ is the
 (unique) solution of the linear SDE
 $$ 
 Z^{(\sigma)}_0=P_0, \quad dZ^{(\sigma)}_t= Z^{(\sigma)}_t \sigma
 dW_t+(L+1) dt
 $$
 for some standard Brownian motion $W$.  In view of~\eqref{eq:32}, we
 can thus choose $\sigma$ and $N>N_0(\sigma)$ to fulfill the requirement
 $\QQ^{N,\sigma}(P_1>\epsilon)<\epsilon$ provided that $Z^{(\sigma)}_1$ converges
 to $0$ in probability as $\sigma\uparrow\infty$. For this, observe
 that
\begin{align}
Z^{(\sigma)}_1&=P_0 e^{\sigma W_1-\sigma^2/2}\left(1+\int_{0}^1 (L+1)e^{-\sigma W_t+\sigma^2 t/2}dt \right)\\
&\leq P_0 e^{\sigma W_1-\sigma^2/2}+P_0 (L+1) e^{-\sigma^2/(2\ln\sigma)}
\int_{0}^{1-1/\ln {\sigma}} e^{\sigma (W_1-W_t)} dt\\
&\quad+P_0 (L+1) \int_{1-1/\ln {\sigma}}^{1} e^{\sigma (W_1-W_t)-\sigma^2 (1-t)/2}dt.
\end{align}
Clearly, the first two summands in the last expression vanish almost
surely while, due to Fubini's theorem, the expectation of the last one is
$$
\E\left(\int_{1-1/\ln {\sigma}}^{1} e^{\sigma (W_1-W_t)-\sigma^2 (1-t)/2}dt\right)=
\frac{1}{\ln\sigma} \to 0
$$
for $\sigma \uparrow \infty$. This shows that indeed $\lim_{\sigma\uparrow\infty}Z^{(\sigma)}_1=0$ in probability and the proof is completed.

\end{proof}
% \begin{rem}
% When dealing with super--hedging of call options it is reasonable to assume that the 
% initial number of shares is less than one (as we did
% in Corollary \ref{cor:1}). 
% The case $x_0>1$ is less natural and technically more involved. Roughly speaking, 
% in this case the investor needs to leave one share to the maturity
% date and find the best way to liquidate $x_0-1$ shares. It might happen that
% the optimal time to liquidate the shares is not in the initial moment of time.
% Namely, the investor should wait a while before he enters the market.
% This setup can be considered as a partial case of an optimal liquidation problems which we leave for future research.
% \end{rem}

\subsubsection{Utility maximization by duality}\label{sec:VerificationTheorem}

Super-replication duality is often used to study utility maximization
problems which, in turn, allow for less conservative and practically
more useful contingent claim valuation paradigms such as indifference
pricing. While this paper has to leave indifference valuation for
future research, let us note here a verification theorem to illustrate
the suitability of our duality concepts for this theory:

\begin{Corollary}\label{cor:2}
  Let Assumptions~\ref{asp:1} and~\ref{asp:2} hold true and consider a
  strictly concave, increasing and differentiable utility function $u$
  for which
 $$
 \sup_{X \in \cX \text{ with } X_T=0} \E[u(\xi^X_T)\vee 0]<\infty.
 $$
 Suppose $\hat{X}\in \cX$ with $\hat{X}_T=0$ yields via
 \begin{align}
   \label{eq:33}
   \frac{d\hat{\QQ}}{d\P} \set \frac{u'(\xi^{\hat{X}}_T)}{\E[u'(\xi^{\hat{X}}_T)]}
 \end{align}
 a probability measure $\hat{\QQ} \ll \P$ which allows for a
 \emph{shadow price} $\hat{M}$ for spread dynamics
 \begin{align}
   \label{eq:34}
   \hat{\lambda}_t \set \frac{\rho_t}{\delta_t} \E_{\hat{\QQ}} \left[
   \int_{[t,T]} \hat{\alpha}_u \mu(du) \middle| \cF_t\right], \quad 0
   \leq t \leq T,
 \end{align}
  with $\hat{\alpha} \set \rho \zeta^{\hat{X}} \in
  L^2(\hat{\QQ}\otimes \mu)$, i.e., for a $\hat{\QQ}$-square
  integrable martingale $\hat{M}$ such that
  \begin{align}
    \label{eq:35}
    P_t - \hat{\lambda}_t \leq \hat{M}_t \leq P+\hat{\lambda}_t, \quad
    0 \leq t \leq T,
  \end{align}
  with equality almost surely holding true in the first and second
  estimate on the support of $d\hat{X}^{\downarrow}$ and
  $d\hat{X}^{\uparrow}$, respectively.

  Then $\hat{X}$ yields the highest expected utility $\E[u(\xi^X_T)]$
  among all strategies $X \in \cX$ with $X_T=0$.
\end{Corollary}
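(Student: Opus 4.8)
The plan is to use that the utility maximization problem is concave in $X$: from the convex representation of terminal wealth below, $X\mapsto\xi^X_T$ is the sum of a linear functional and the concave functional $-\tfrac12\|\rho\zeta^X\|^2_{L^2(\mu)}$, so optimality of $\hat{X}$ follows once one checks a first‑order condition at $\hat{X}$, and the hypotheses on the shadow price $\hat{M}$ are designed to supply exactly this.

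\emph{Reduction.} Fix a competitor $X\in\cX$ with $X_T=0$ (recall $x_0,\xi_0$ are the same for all strategies). Concavity and differentiability of $u$ give $u(\xi^X_T)-u(\xi^{\hat{X}}_T)\le u'(\xi^{\hat{X}}_T)(\xi^X_T-\xi^{\hat{X}}_T)$ pointwise, hence by~\eqref{eq:33}
\[
  \E[u(\xi^X_T)]-\E[u(\xi^{\hat{X}}_T)]\le\E[u'(\xi^{\hat{X}}_T)]\,\E_{\hat{\QQ}}[\xi^X_T-\xi^{\hat{X}}_T],
\]
and since $u$ is strictly increasing the factor $\E[u'(\xi^{\hat{X}}_T)]$ is strictly positive; the integrability making this meaningful is provided by $\sup_{X}\E[u(\xi^X_T)\vee0]<\infty$. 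So it suffices to show $\E_{\hat{\QQ}}[\xi^X_T]\le\E_{\hat{\QQ}}[\xi^{\hat{X}}_T]$. For this I would first record the convex wealth representation obtained from~\eqref{eq:7} by integrating $\int P^X\circ dX$ by parts and applying the product rule to $(\zeta^X)^2$, using $\rho_t\zeta^X_t=\zeta_0+\int_{[0,t]}(\rho_s/\delta_s)\,d(\Xup_s+\Xdo_s)$ (from~\eqref{eq:3}) and $\kappa=\delta/\rho^2$:
\[
  \xi^X_T=\xi_0+P_0x_0+\tfrac12\iota x_0^2+\tfrac12\delta_0\zeta_0^2+\int_{[0,T]}X_{t-}\,dP_t-\tfrac12\|\rho\zeta^X\|^2_{L^2(\mu)}.
\]
The additive constant does not depend on $X$, so, writing $V^X:=\Xup+\Xdo$ and $V^{\hat{X}}:=\hat{X}^\uparrow+\hat{X}^\downarrow$,
\[
  \xi^X_T-\xi^{\hat{X}}_T=\int_{[0,T]}(X_{t-}-\hat{X}_{t-})\,dP_t-\tfrac12\bigl(\|\rho\zeta^X\|^2_{L^2(\mu)}-\|\rho\zeta^{\hat{X}}\|^2_{L^2(\mu)}\bigr).
\]

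\emph{Inserting the shadow price.} I would then split $dP=d\hat{M}+d(P-\hat{M})$. As $\hat{\QQ}\ll\P$, Assumption~\ref{asp:1} makes $\hat{M}$ $\hat{\QQ}$‑a.s.\ continuous, so integrating by parts (using $(X-\hat{X})_{0-}=(X-\hat{X})_T=0$) and invoking the sandwich $|P-\hat{M}|\le\hat{\lambda}$ together with the complementary‑slackness equalities $\hat{M}=P+\hat{\lambda}$ on $\supp(d\hat{X}^\uparrow)$ and $\hat{M}=P-\hat{\lambda}$ on $\supp(d\hat{X}^\downarrow)$ gives the pathwise bound
\[
  \int_{[0,T]}(X_{t-}-\hat{X}_{t-})\,d(P_t-\hat{M}_t)\le\int_{[0,T]}\hat{\lambda}_t\,d(V^X_t-V^{\hat{X}}_t).
\]
Next, using $\hat{\alpha}=\rho\zeta^{\hat{X}}$, the representation~\eqref{eq:34}, and $\rho_t(\zeta^X_t-\zeta^{\hat{X}}_t)=\int_{[0,t]}(\rho_s/\delta_s)\,d(V^X_s-V^{\hat{X}}_s)$, a Fubini argument interchanging $\mu(dt)$ with the predictable measure $d(V^X-V^{\hat{X}})$ and replacing $\E_{\hat{\QQ}}[\int_{[t,T]}\hat{\alpha}_u\mu(du)\mid\cF_t]$ by $(\delta_t/\rho_t)\hat{\lambda}_t$ under the expectation (the $\cF_{t-}$ versus $\cF_t$ discrepancy being harmless by Assumption~\ref{asp:1} and continuity of $\delta,\rho$) yields
\[
  \E_{\hat{\QQ}}\Bigl[\int_{[0,T]}\hat{\lambda}_t\,d(V^X_t-V^{\hat{X}}_t)\Bigr]=\E_{\hat{\QQ}}\Bigl[\ip{\rho\zeta^{\hat{X}}}{\rho(\zeta^X-\zeta^{\hat{X}})}_{L^2(\mu)}\Bigr].
\]

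\emph{Conclusion and the main difficulty.} In $L^2(\hat{\QQ}\otimes\mu)$ one has $\ip{a}{b-a}+\tfrac12\|a\|^2-\tfrac12\|b\|^2=-\tfrac12\|a-b\|^2\le0$; applying this with $a=\rho\zeta^{\hat{X}}$, $b=\rho\zeta^X$ and chaining the three displays above gives $\E_{\hat{\QQ}}[\xi^X_T-\xi^{\hat{X}}_T]\le\E_{\hat{\QQ}}[\int_{[0,T]}(X_{t-}-\hat{X}_{t-})\,d\hat{M}_t]$, so, with the Reduction step, the claim follows once the right‑hand side is seen to be nonpositive. This last point is the main obstacle: since no admissibility is imposed on $X$, one must argue that $\int(X-\hat{X})_-\,d\hat{M}$ is a genuine $\hat{\QQ}$‑martingale. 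I would handle it by the super‑linear‑cost mechanism that already renders admissibility superfluous in Theorem~\ref{thm:1}: the penalty $-\tfrac12\|\rho\zeta^X\|^2_{L^2(\mu)}$ a priori confines the maximization to strategies $X$ that are integrable enough under $\hat{\QQ}$ (otherwise $\E_{\hat{\QQ}}[\xi^X_T]=-\infty$ and there is nothing to prove), and for these a localization and uniform‑integrability argument --- leveraging $\hat{M}\in\cM^2(\hat{\QQ})$ together with the $L^2(\hat{\QQ})$‑control of $V^X_T$ forced by finiteness of $\E_{\hat{\QQ}}[\|\rho\zeta^X\|^2_{L^2(\mu)}]$ (via $\rho_t\zeta^X_t\ge\zeta_0+cV^X_t$ and the point mass $\kappa_T$ of $\mu$), in the spirit of the temporary‑impact analysis in~\cite{GuasoniRasonyi:15} --- upgrades the local martingale $\int(X-\hat{X})_-\,d\hat{M}$ to a true one. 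The remaining technicalities, namely the integrability bookkeeping in the Reduction step and the predictable‑projection identity above, are routine given Assumption~\ref{asp:1}.
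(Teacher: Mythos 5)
Your proposal is correct and takes essentially the same route as the paper's own proof: the concavity reduction to $\E_{\hat{\QQ}}[\xi^X_T]\le\E_{\hat{\QQ}}[\xi^{\hat{X}}_T]$, the convex wealth representation of Lemma~\ref{lem:1} (yours is just its integrated-by-parts form), the sandwich~\eqref{eq:35} with the flat-off conditions to dominate the price term by the inner product with $\hat{\alpha}=\rho\zeta^{\hat{X}}$ via Fubini/optional projection, and completing the square in $L^2(\hat{\QQ}\otimes\mu)$. Your explicit dichotomy for turning $\int (X-\hat{X})_{t-}\,d\hat{M}_t$ into a true martingale (either $\E_{\hat{\QQ}}[\xi^X_T]=-\infty$, or the quadratic penalty forces $\Xup_T+\Xdo_T\in L^2(\hat{\QQ})$) is sound and makes explicit what the paper subsumes in the phrase ``proceeding as for~\eqref{eq:49}, \eqref{eq:50}''; the ingredient $\sup_{0\le t\le T}|P_t|\in L^2(\hat{\QQ})$ needed there follows from the corollary's hypotheses exactly as in Lemma~\ref{lem:2}.
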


The proof of this corollary will follow readily from considerations
required for the proof of Theorem~\ref{thm:1}. We thus postpone it to
the end of Section~\ref{sec:ProofLowerBound}. We adopted the notion of
shadow prices from the theory of optimal investment with proportional
transaction costs (see, e.g., \cite{CvitKarat:96,
  KallsenMuhleKarbe:10, CzichowskySchachermayer:17}) where the
martingales $\hat{M}$ with the stated flat-off conditions are
constructed explicitly or emerge from duality of utility
maximization. In our setting, the construction of shadow prices is
more challenging as the spread $\hat{\lambda}$ is not given
exogenously. It is thus not obvious how to construct optimal
investment policies $\hat{X}$ from the above verification result. See,
however, \cite{BankVoss:18b} for a convex analytic approach to
exponential utility maximization when $P$ is a Brownian motion with
drift and $\delta$ and $r$ are constant.

% \newpage

% \todo[inline]{Interpret effect of initial values $\zeta_0$, $x_0$. Can
%   we say something when $T$ is larger than necessary for knowing $H$?
%   $\leadsto$ liquidation effects like not immediate unwinding of risky
%   position once $H$ is known? Superreplication of a constant $H$ like
%   $H=0$?}

% \todo[inline]{Include remark on super-replication also in
%   risky asset position? Order execution implications? Impact of time
%   horizon there?}

% \todo[inline]{Give interpretation of dual
%   formula. In particular, address frictionless case, constant spread
%   i.e. proportional transaction costs as limiting cases}

% \todo[inline]{Can we say that the $\alpha$s are a Lagrangemultipliers
%   enforcing the spread dynamics?}

%  \todo[inline]{Does the
%   observation that the supremum could be constrained to increasing,
%   right-continuous $\alpha \geq \zeta_0$, $\QQ$ with bounded density
%   granted~\eqref{eq:59} holds and bounded $M$ help in any way? Should
%   we mention this somewhere?}

% \todo[inline]{Trivial super-replication
%   of calls with Bachelier/Black-Scholes $P$?}

% \todo[inline]{Comparative statics on the dual side: monotonicity in
%   $\delta$ obvious, but also in $r$?}

\section{Proof of the duality theorem}
\subsection{Preliminaries}

Let us prepare the proof of Theorem~\ref{thm:1} by rewriting the
profits and losses from trading in our price impact model:

\begin{Lemma}\label{lem:1}
  Suppose Assumption~\ref{asp:2} holds true. Then, for any strategy
  $X \in \cX$ with $X_T=0$, we have
\begin{align}
  \label{eq:36}
  \xi^X_T
& = v_0 -\Lambda^X_T
\end{align}
where
\begin{align}
  \label{eq:37}
  v_0 \set  \xi_{0}+\frac{1}{2}(\iota x_0^2+\delta_0 \zeta_{0}^2)
\end{align}
and
\begin{align}
  \label{eq:38}
  \Lambda^X_T \set \int_{[0,T]}  P_t \,dX_t +
  \frac{1}{2} \int_{[0,T]}  (\eta^X_t)^2 \, \mu(dt)
\end{align}
with
\begin{align}
  \label{eq:39}
  \eta^X_t \set \rho_t \zeta^X_t = \zeta_0+\int_{[0,t]} \frac{\rho_s}{\delta_s} \,d(\Xup_s+\Xdo_s), \quad 0 \leq t \leq T.
\end{align}
Moreover, there is a constant $C>0$, depending only on the bounds on
$\delta/\rho$ from Assumption~\ref{asp:2}, such that, for any
$X \in \cX$, we have
\begin{align}
  \label{eq:40}
  \Xup_T+\Xdo_T \leq C\left(l+\sup_{0 \leq t \leq T}
  |P_t|\right) \text{ on } \{\Lambda^X_T \leq l^2\}.
\end{align}
Finally, the mapping $X \mapsto \Lambda^X_T$ is convex and
lower-semicontinuous. More precisely, if $X^n \in \cX$ converges
weakly to $X \in \cX$ in the sense that almost surely $X^{n,\uparrow}$
and $X^{n,\downarrow}$ converge weakly as Borel-measures on $[0,T]$
to, respectively, some adapted, right-continuous, increasing $A$ and
$B$ with $X=x_0+A-B$, $A_{0-}=B_{0-}=0$, then almost surely
\begin{align}
  \label{eq:41}
  \liminf_n \Lambda^{X^n}_T \geq \Lambda^X_T.
\end{align}
\end{Lemma}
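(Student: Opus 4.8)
The plan is to establish the three assertions in order: first the identity~\eqref{eq:36}, then the coercivity estimate~\eqref{eq:40}, and finally convexity and lower-semicontinuity~\eqref{eq:41}. For the identity, I would start from the definition~\eqref{eq:7} of $\xi^X_T$ and unfold the two $\circ$-integrals. Using $P^X_t = P_t+\iota X_t$ from~\eqref{eq:2}, the term $\int_{[0,T]} P^X_t\circ dX_t$ splits into $\int_{[0,T]} P_t\circ dX_t + \iota\int_{[0,T]} X_t\circ dX_t$; the second piece is a Stratonovich-type self-integral that telescopes to $\frac12\iota\big(X_T^2 - x_0^2\big) = -\frac12\iota x_0^2$ since $X_T=0$, which is where the $\frac12\iota x_0^2$ in $v_0$ comes from. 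For the spread term, I would integrate~\eqref{eq:3} to express $\zeta^X$, or more conveniently work with $\eta^X_t=\rho_t\zeta^X_t$; differentiating $\eta^X$ gives $d\eta^X_t = \rho_t\,d\zeta^X_t + r_t\rho_t\zeta^X_t\,dt = \frac{\rho_t}{\delta_t}(d\Xup_t+d\Xdo_t)$, which yields the representation of $\eta^X$ in~\eqref{eq:39}. Then $\int_{[0,T]}\zeta^X_t\circ d(\Xup_t+\Xdo_t) = \int_{[0,T]} \frac{\eta^X_t}{\rho_t}\circ\,\rho_t\,d\eta^X_t\cdot\frac{\delta_t}{\rho_t^2}$; rewriting $\frac{1}{\delta_t}(d\Xup_t+d\Xdo_t) = \frac{d\eta^X_t}{\rho_t}$ and matching against $\mu(dt)=|d\kappa_t|$ on $(0,T)$ plus $\kappa_T\,\mathrm{Dirac}_T$, one recognizes the self-integral $\int \eta^X\circ d\eta^X$ against the measure $d(\rho^2/\delta)^{-1}$... the cleanest route is: observe $\frac{1}{\delta_t}(d\Xup_t+d\Xdo_t)=\rho_t^{-1}d\eta^X_t$ and $\mu(dt)$ is built precisely so that integration by parts turns $\int \zeta^X\circ d(\Xup+\Xdo)$ into $\frac12\int_{[0,T]}(\eta^X_t)^2\mu(dt) - \frac12\delta_0\zeta_0^2$; the boundary term $\frac12\delta_0\zeta_0^2$ is absorbed into $v_0$, and the point mass $\kappa_T$ at $T$ captures the contribution from the final bulk liquidation. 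This is the step where I would be most careful with the $\circ$-convention at jump times, since the averaging in~\eqref{eq:8} is exactly what makes the self-integral telescope to a clean square.

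For the coercivity bound~\eqref{eq:40}, the idea is that on $\{\Lambda^X_T\le l^2\}$ both terms in~\eqref{eq:38} are controlled. From $\frac12\int_{[0,T]}(\eta^X_t)^2\mu(dt)\le l^2$ and $\mu([0,T])=\delta_0$ being comparable to a constant, we get an $L^2(\mu)$ bound on $\eta^X$; since $\eta^X_t=\zeta_0+\int_{[0,t]}\frac{\rho_s}{\delta_s}d(\Xup_s+\Xdo_s)$ is nonnegative and nondecreasing in the total variation $\Xup+\Xdo$, and $\delta/\rho$ is bounded above and below (Assumption~\ref{asp:2}), an $L^2(\mu)$ bound on $\eta^X$ forces $\eta^X_T\le C'l$ (e.g.\ since $\eta^X$ is monotone the bound $\int(\eta^X)^2\mu\le 2l^2$ on an interval of $\mu$-mass comparable to a constant near $T$ controls $\eta^X_{T-}$, and the point mass at $T$ controls the last jump), hence $\Xup_T+\Xdo_T\le C''(\eta^X_T+\zeta_0)\le C(l+1)$. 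For the $\int_{[0,T]}P_t\,dX_t$ term: integrating by parts, $\int_{[0,T]}P_t\,dX_t = P_TX_T - x_0P_0 - \int_{[0,T]}X_{t-}\,dP_t$... but $P$ need not be a semimartingale under $\P$, so instead I would just bound $\big|\int_{[0,T]}P_t\,dX_t\big|\le \sup_t|P_t|\,(\Xup_T+\Xdo_T)$ directly; combining with $\Lambda^X_T\le l^2$ and the already-derived bound on $\Xup_T+\Xdo_T$ in terms of $l$ closes the estimate after absorbing constants.

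For convexity and lower-semicontinuity~\eqref{eq:41}, convexity of $X\mapsto\Lambda^X_T$ is where Assumption~\ref{asp:2}, specifically~\eqref{eq:5}, enters: the map $X\mapsto\int P\,dX$ is linear, so convexity reduces to convexity of $X\mapsto\frac12\int_{[0,T]}(\eta^X_t)^2\mu(dt)$. Here $\eta^X$ is \emph{not} linear in $X$ in general, but $\eta^X_t$ is linear in the pair $(\Xup,\Xdo)$ through~\eqref{eq:39}; the subtlety is that different Hahn decompositions compete, and one must check that the minimal representation is the relevant one — this is precisely where monotonicity of $\kappa=\delta/\rho^2$ guarantees that "canceling" an up-move against a down-move never increases $\Lambda^X_T$, so that $\Lambda^X_T$ equals the infimum over representations $X=x_0+A-B$ of the (jointly convex, since quadratic in a linear functional of $(A,B)$) expression, and an infimum of convex functions over a convex constraint set that is itself linear in $X$ is convex. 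For lower-semicontinuity under the stated weak convergence, the plan is: weak convergence of $X^{n,\uparrow}\to A$, $X^{n,\downarrow}\to B$ as measures gives, for each fixed $t$ that is a.s.\ a continuity point, convergence of $\eta^{X^n}_t$ to $\zeta_0+\int_{[0,t]}\frac{\rho_s}{\delta_s}d(A+B)_s\ge\eta^X_t$ (the inequality because $A+B$ may exceed the total variation of the limit $X$, again by cancellation, and $\frac{\rho_s}{\delta_s}>0$); then Fatou's lemma applied to the measure $\mu$ gives $\liminf_n\frac12\int(\eta^{X^n}_t)^2\mu(dt)\ge\frac12\int(\eta^X_t)^2\mu(dt)$, while for the linear term $\int P\,dX^n\to\int P\,dX$ by weak convergence against the continuous integrand $P$ (with care at the point mass/endpoint, using right-continuity of the limiting objects). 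The main obstacle throughout is the bookkeeping around the Hahn decomposition versus arbitrary signed representations and the role of the $\circ$-integral at jumps; once it is established that $\Lambda^X_T$ can be written as an infimum over all representations of a jointly convex, lower-semicontinuous functional of $(A,B)$, both convexity and~\eqref{eq:41} follow by standard arguments.
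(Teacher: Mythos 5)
Your treatment of the identity \eqref{eq:36} and of convexity/lower-semicontinuity is essentially sound and close to the paper's own argument: the Stratonovich telescoping of $\iota\int X\circ dX$ and the rewriting of the spread cost as $\int\kappa_t\circ d\bigl(\tfrac{1}{2}(\eta^X_t)^2\bigr)$ followed by integration by parts is exactly the paper's route; your convexity argument (viewing $\Lambda^X_T$ as the value at the Hahn decomposition, which minimizes a jointly convex functional of $(A,B)$ over all representations $X=x_0+A-B$ because $\mu\geq 0$, i.e.\ because $\kappa$ is decreasing) is a valid, slightly more abstract variant of the paper's direct comparison $0\leq\eta^{(X_0+X_1)/2}\leq\tfrac{1}{2}(\eta^{X_0}+\eta^{X_1})$; and the lower-semicontinuity step via Fatou is the same as in the paper (a small point you leave implicit: the discontinuity points of $A+B$ are $\mu$-null on $(0,T)$ since $\kappa$ is continuous, so convergence at continuity points and at $t=T$ suffices).

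The genuine gap is in your derivation of the coercivity bound \eqref{eq:40}. You start from ``$\tfrac{1}{2}\int_{[0,T]}(\eta^X_t)^2\mu(dt)\leq l^2$'', but this does not follow from $\Lambda^X_T\leq l^2$: the linear term $\int_{[0,T]}P_t\,dX_t$ can be very negative, so on $\{\Lambda^X_T\leq l^2\}$ the quadratic term is only bounded by $l^2-\int_{[0,T]} P_t\,dX_t$, not by $l^2$. Consequently your intermediate conclusion $\Xup_T+\Xdo_T\leq C(l+1)$, with no dependence on $\sup_t|P_t|$, cannot hold in general --- it is stronger than \eqref{eq:40} itself, and the $\sup_t|P_t|$ term there is precisely what accounts for strategies that finance a large spread cost out of trading gains. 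The closing sentence ``combining \dots closes the estimate'' does not repair this, since the total-variation bound was already obtained from the false premise. The correct assembly uses exactly your two ingredients but in the opposite order, as in the paper: on $\{\Lambda^X_T\leq l^2\}$ one has $\tfrac{1}{2}\int_{[0,T]}(\eta^X_t)^2\,\mu(dt)\leq l^2-\int_{[0,T]}P_t\,dX_t\leq l^2+\sup_t|P_t|\,(\Xup_T+\Xdo_T)$, while $\tfrac{1}{2}\int_{[0,T]}(\eta^X_t)^2\,\mu(dt)\geq(\Xup_T+\Xdo_T)^2/C$ (via $\eta^X_T\geq(\Xup_T+\Xdo_T)\cdot\inf_t(\rho_t/\delta_t)$ and the atom of $\mu$ at $T$); setting $x=\Xup_T+\Xdo_T$ and $p=\sup_t|P_t|$, this gives the quadratic inequality $x^2\leq C(px+l^2)$ and hence $x\leq C'(l+p)$, which is \eqref{eq:40}.
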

\begin{proof}
\begin{enumerate}
\item
Let us first prove our formula~\eqref{eq:36} for $\xi^X_T$. For the
integral of the mid-price we get by continuity of $P$ that
\begin{align}
 \int_{[0,T]} P^X_t \circ dX_t &= \int_{[0,T]} P_t \,dX_t +
  \iota \int_{[0,T]} X_t \circ dX_t \\&=  \int_{[0,T]} P_t \,dX_t +
  \iota\frac{1}{2}(X^2_T-x_0^2)  \label{eq:42}
\end{align}
where the last identity is due to the chain rule for Stratonovich
integrals. Similarly, using $\zeta^X = \eta^X/\rho$ and
$d(\Xup_t+\Xdo_t)=\frac{\delta_t}{\rho_t}d\eta^X_t$, we get
\begin{align}
 \int_{[0,T]} \zeta^X_t \circ d(\Xup_t+\Xdo_t)
 & =  \int_{[0,T]} \frac{\delta_t}{\rho_t^2} \eta^X_t\circ d\eta^X_t
%  \\&
= \int_{[0,T]} \kappa_t\circ
      d\left(\frac{1}{2}(\eta^X_t)^2\right)
\\&=
    \kappa_T\frac{1}{2}(\eta^X_T)^2-\delta_0\frac{1}{2}\zeta_0^2
    - \int_{(0,T)}  \frac{1}{2}(\eta^X_t)^2 \,d\kappa_t
\\&= \frac{1}{2}\int_{[0,T]} (\eta^X_t)^2 \,\mu(dt)-\frac{1}{2}\delta_0\zeta^2_0.  \label{eq:43}
 \end{align}
Combining~\eqref{eq:42} with~\eqref{eq:43} we obtain~\eqref{eq:36}
when $X_T=0$.

\item For $X \in \cX$, it follows from the definition~\eqref{eq:38} of
  $\Lambda^X_T$ that on $\{\Lambda^X_T \leq l^2\}$ we have
 \begin{align}
   \label{eq:44}
  l^2+\sup_{t \in [0,T]} |P_t|(\Xup_T+\Xdo_T)
&\geq l^2 - \int_{[0,T]} P_t \,dX_t
\\& \geq \frac{1}{2}\int_0^T(\eta^X_t)^2 \mu(dt)
\geq (\Xup_T+\Xdo_T)^2/C
 \end{align}
 for some constant $C>0$ only depending on the bounds on $\delta/\rho$
 from Asssumption~\ref{asp:2}. Hence, $x \set \Xup_T+\Xdo_T$ is such
 that $x^2 \leq C(px +l^2)$ for $p \set \sup_{t \in [0,T]} |P_t|$. This
 implies~\eqref{eq:40}.

\item Let $X_0, X_1 \in \cX$ and observe that then
  $\frac{1}{2}(\Xup_0+\Xup_1)-\frac{1}{2}(\Xdo_0+\Xdo_1)$
  is a decomposition of $X \set \frac{1}{2}(X_0+X_1)$ into the difference of
  two right-continuous increasing processes. It follows that
  $\frac{1}{2}(\Xup_0+\Xup_1)-\Xup$ and
  $\frac{1}{2}(\Xdo_0+\Xdo_1)-\Xdo$ are increasing and so $0 \leq \eta^X \leq
  \frac{1}{2}(\eta^{X_0}+\eta^{X_1})$. In light of~\eqref{eq:38}, this
  yields the convexity of $\Lambda^X$.

  Similarly, for $X^n$ converging to $X=x_0+A-B$ as described in the
  lemma, $A-\Xup$ and $B-\Xdo$ are increasing. Hence, we have
  $\eta^{X^n}_t \to \eta^{x_0+A+B}_t \geq \eta^X_t$ in $t=T$ and in
  every point of continuity $t$ for $A+B$. By continuity of $P$, we
  also have
  $$\lim_n \int_{[0,T]} P_t \,dX^n_t = \int_{[0,T]} P_t \, dX_t.$$
  So lower-semicontinuity of $X \mapsto \Lambda^X_T$ is a consequence
  of~\eqref{eq:38} and Fatou's lemma.
\end{enumerate}
\end{proof}

\subsection{Proof of the lower bound}
\label{sec:ProofLowerBound}

Observe first that the supremum in~\eqref{eq:11} is greater than
$-\infty$. Indeed we can take any $\QQ^0 \ll \P$ for which
$\alpha^0_t \set \sup_{0 \leq s \leq t} |P_s\rho_s|$, $0 \leq t \leq T$, is in
$L^2(\QQ^0\otimes\mu)$ and let $M^0 \set 0$ to obtain a triple
$(\QQ^0,M^0,\alpha^0)$ satisfying the constraint~\eqref{eq:12}. Indeed, we
then have
\begin{align*}
  %\label{eq:45}
  \frac{\rho_t}{\delta_t} \E_{\QQ^0}\left[\int_{[t,T]} \alpha^0_u \,
  \mu(du)\middle|\cF_t\right]
  &\geq \frac{\rho_t}{\delta_t} \alpha^0_t
  \E_{\QQ^0}\left[\mu([t,T])\middle|\cF_t\right]
  = \frac{\alpha^0_t}{\rho_t}\\&\geq |P_t|= |P_t-M^0_t|, \quad 0 \leq
                                 t \leq T.
\end{align*}
Hence, the supremum in~\eqref{eq:11} cannot be $-\infty$.

To prove that it gives a lower bound, consider $\xi_0 \in \RR$ and
$X \in \cX$ with $X_T=0$ such that $\xi^X_T \geq H \geq 0$ and let
$(\QQ,M,S)$ be a triple as in Theorem~\ref{thm:1}.

\begin{Lemma}\label{lem:2}
 We have
\begin{align}
  \label{eq:46}
  \Xup_T+\Xdo_T, \sup_{0 \leq t \leq T} |P_t| \in L^2(\QQ).
\end{align}
\end{Lemma}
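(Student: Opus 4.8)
The plan is to read off the required integrability from the terminal–wealth identity of Lemma~\ref{lem:1} together with the dual constraint~\eqref{eq:12}, reducing everything to Doob's $L^{2}$-inequality for two square-integrable $\QQ$-martingales. As a first step I would note that, since $X_{T}=0$ and Assumption~\ref{asp:2} is in force, Lemma~\ref{lem:1} gives $\xi^{X}_{T}=v_{0}-\Lambda^{X}_{T}$ with $v_{0}$ the finite constant from~\eqref{eq:37}; from $\xi^{X}_{T}\ge H\ge 0$ one gets $\Lambda^{X}_{T}\le v_{0}$ almost surely, so the a priori bound~\eqref{eq:40}, applied with any constant $l^{2}\ge v_{0}^{+}$, yields
\begin{align*}
  \Xup_{T}+\Xdo_{T}\ \le\ C\bigl(\sqrt{v_{0}^{+}}+\sup\nolimits_{0\le t\le T}|P_{t}|\bigr)\qquad\text{a.s.}
\end{align*}
Hence it suffices to show $\sup_{0\le t\le T}|P_{t}|\in L^{2}(\QQ)$.

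For that, I would use the constraint~\eqref{eq:12} and the triangle inequality: $\QQ$-a.s., for every $t\in[0,T]$,
\begin{align*}
  |P_{t}|\ \le\ |M_{t}|+\frac{\rho_{t}}{\delta_{t}}\,\E_{\QQ}\!\left[\int_{[t,T]}\alpha_{u}\,\mu(du)\,\middle|\,\cF_{t}\right].
\end{align*}
By Assumption~\ref{asp:2} the ratio $\rho/\delta$ is bounded by some constant $c$ on $[0,T]$; the conditional expectation on the right is nonnegative by~\eqref{eq:12}; and since $\int_{[t,T]}\alpha_{u}\,\mu(du)\le\int_{[0,T]}|\alpha_{u}|\,\mu(du)=:A$, it is dominated by $N_{t}:=\E_{\QQ}[A\mid\cF_{t}]$. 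Therefore $\sup_{0\le t\le T}|P_{t}|\le\sup_{0\le t\le T}|M_{t}|+c\,\sup_{0\le t\le T}N_{t}$, and since $P$ is continuous all these suprema may be taken over a countable dense subset of $[0,T]$.

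It then remains to apply Doob. The random measure $\mu$ has (deterministic) total mass $\mu([0,T])=\kappa_{0}=\delta_{0}$, so Cauchy--Schwarz gives $\E_{\QQ}[A^{2}]\le\delta_{0}\,\E_{\QQ}\!\bigl[\int_{[0,T]}\alpha_{u}^{2}\,\mu(du)\bigr]=\delta_{0}\,\|\alpha\|^{2}_{L^{2}(\QQ\otimes\mu)}<\infty$, so $N$ is a square-integrable $\QQ$-martingale; $M\in\cM^{2}(\QQ)$ is one by hypothesis. Doob's $L^{2}$-inequality then bounds $\E_{\QQ}[\sup_{t}|M_{t}|^{2}]$ and $\E_{\QQ}[\sup_{t}N_{t}^{2}]$ by $4\,\E_{\QQ}[M_{T}^{2}]$ and $4\,\E_{\QQ}[A^{2}]$ respectively, both finite. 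Hence $\sup_{0\le t\le T}|P_{t}|\in L^{2}(\QQ)$, and the first step upgrades this to $\Xup_{T}+\Xdo_{T}\in L^{2}(\QQ)$.

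I do not expect a genuine obstacle here; the argument is essentially a combination of the bounds already recorded in Lemmas~\ref{lem:1} and the constraint~\eqref{eq:12} with Doob's inequality. The only points requiring care are measure-theoretic: measurability in $t$ of the conditional expectations, the domination $\int_{[t,T]}\alpha_{u}\,\mu(du)\le\int_{[0,T]}|\alpha_{u}|\,\mu(du)$, and the identification of the total mass of the random measure $\mu$ as the constant $\delta_{0}$; beyond this one must only remember to absorb the factor $\rho_{t}/\delta_{t}$ using the boundedness assumption on $\delta/\rho$.
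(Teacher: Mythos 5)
Your proof is correct and follows essentially the same route as the paper: Doob's $L^2$-maximal inequality for $M$ and for the martingale closed by $\int_{[0,T]}|\alpha_u|\,\mu(du)$ (controlled via Cauchy--Schwarz and $\mu([0,T])=\delta_0$) gives $\sup_{0\le t\le T}|P_t|\in L^2(\QQ)$, and the a priori bound~\eqref{eq:40} with $l^2$ of order $v_0$ (available since $\xi^X_T\ge H\ge 0$ forces $\Lambda^X_T\le v_0$) then yields $\Xup_T+\Xdo_T\in L^2(\QQ)$. You merely spell out the square-integrability of the right-hand side of~\eqref{eq:12} that the paper states tersely, which is a welcome clarification rather than a deviation.
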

\begin{proof}
  By Doob's maximal inequality,
  $\sup_{t \in [0,T]} |M_t| \in L^2(\QQ)$. Similarly,
  $\alpha \in L^2(\QQ \otimes \mu)$ yields that also the supremum over
  $[0,T]$ of the right-hand side of~\eqref{eq:12} is in
  $L^2(\QQ)$. Together with our previous observation, this implies
  that also $\sup_{0 \leq t \leq T} |P_t| \in L^2(\QQ)$. Square-integrability of
  $\Xup_T+\Xdo_T$ is now immediate from~\eqref{eq:40} with
  $l^2 \set v_0 = \xi^X_T+\Lambda^X_T \geq \Lambda^X_T$ because
  $\xi^X_T \geq H \geq 0$ almost surely.
\end{proof}

By Lemma~\ref{lem:1}, the super-replication property of $X$ is
tantamount to
\begin{align}
  \label{eq:47}
  v_0 \geq H + \int_{[0,T]} P_t \,dX_t + \frac{1}{2} \int_{[0,T]}
  (\eta^X_t)^2 \,\mu(dt).
\end{align}
Observe that by~\eqref{eq:12} we can estimate
\begin{align}
  \int_{[0,T]} P_t \,dX_t & = \int_{[0,T]} (P_t-M_t)\,dX_t  + \int_{[0,T]} M_t \,dX_t \\
& \geq - \int_{[0,T]} |P_t-M_t| (d\Xup_t+d\Xdo_t)  -
  M_0x_0-\int_0^T X_t \,dM_t
\\& = - \int_{[0,T]} |P_t-M_t| \frac{\delta_t}{\rho_t} \,d\eta^X_t  -
  M_0x_0-\int_0^T X_t \,dM_t, \label{eq:48}
\end{align}
where we first used integration by parts and $X_T=0$ and then
that~\eqref{eq:3} gives
$d\eta^X_t = \rho_t /\delta_t(d\Xup_t+d\Xdo_t)$. Square-integrability
of $M$ and~\eqref{eq:46} yield
$\E_{\QQ} \left[\int_0^T X^2_t d[M]_t^{1/2}\right]<\infty$, ensuring that
$\int_0^. X_t dM_t$ is a true martingale. Hence, taking expectation
in~\eqref{eq:48} we find
\begin{align}
 \E_\QQ &\left[\int_{[0,T]} P_t \,dX_t \right]\\ \label{eq:49}
&\geq - \E_\QQ\left[\int_{[0,T]} |P_t-M_t| \frac{\delta_t}{\rho_t} \,d\eta^X_t  +
  M_0x_0\right]
\\& \geq -\E_\QQ\left[\int_{[0,T]} \E_\QQ\left[\int_{[t,T]} \alpha_u \, \mu(du) \middle| \cF_t\right] \,d\eta^X_t  +
  M_0x_0 \right] \label{eq:50}
\\&= -\E_\QQ\left[\int_{[0,T]} \int_{[0,u]} d\eta^X_t \alpha_u \, \mu(du)  +
  M_0x_0\right]
\\&= -\E_\QQ\left[\int_{[0,T]}(\eta^X_u- \zeta_0) \alpha_u \, \mu(du)  +
  M_0x_0\right],
\end{align}
where in the second estimate we used~\eqref{eq:12} and the first
identity follows from Fubini's theorem in conjunction with the
observation that the conditional expectation in~\eqref{eq:50} can be
dropped as it gives the optional projection of
$(\int_{[t,T]} \alpha_u \, \mu(du))_{0 \leq t \leq T}$.

Now we take expectation in~\eqref{eq:47} and use the preceding estimate
to obtain
\begin{align}
  \label{eq:51}
  v_0 &\geq \E_\QQ\left[H+\int_{[0,T]}\left\{\frac{1}{2} (\eta^X_t)^2 -(\eta^X_t- \zeta_0) \alpha_t\right\} \mu(dt)  -
  M_0x_0\right]
\\&=\E_\QQ\left[H+\int_{[0,T]}\left\{\frac{1}{2}(\eta^X_t -\alpha_t)^2-\frac{1}{2}(\alpha_t- \zeta_0)^2+\frac{1}{2}\zeta_0^2 \right\}\mu(dt)  -
  M_0x_0\right]
  \\ &\geq \E_\QQ\left[H+\int_{[0,T]}\left\{-\frac{1}{2}(\alpha_t- \zeta_0)^2+\frac{1}{2}\zeta_0^2 \right\}\mu(dt)  -
  M_0x_0\right]
\\&= \E_{\QQ}[H] - \frac{1}{2}
    \E_\QQ\left[\int_{[0,T]}(\alpha_t- \zeta_0)^2 \mu(dt)\right]
    + \frac{1}{2} \zeta_0^2 \delta_0-M_0x_0
\end{align}
where in the last step we used that
$\mu([0,T])=\kappa_0=\delta_0$. Recalling the definition~\eqref{eq:37}
of $v_0$, this gives
\begin{align}
  \label{eq:52}
  \xi_0 \geq \E_{\QQ}[H] - \frac{1}{2}\|\alpha-\zeta_0\|^2_{L^2(\QQ
  \otimes \mu)}-M_0x_0-\frac{1}{2}\iota x_0^2,
\end{align}
which yields the claimed lower bound.

It is at this point easy to also give the \textbf{proof of the
  verification result stated in Corollary~\ref{cor:2}}. For this, take
any $X \in \cX$ and note that, by concavity of $u$,
\begin{align}
  \label{eq:53}
  u(\xi^X_T)-u(\xi^{\hat{X}}_T) \leq u'(\xi^{\hat{X}}_T)(\xi^{{X}}_T-\xi^{\hat{X}}_T).
\end{align}
Taking expectations under $\P$ and recalling the definition of $\hat{\QQ}$, it
thus suffices to argue
\begin{align}
  \label{eq:54}
  \E_{\hat{\QQ}}[\xi^X_T] \leq   \E_{\hat{\QQ}}[\xi^{\hat{X}}_T].
\end{align}
For this, note that from~\eqref{eq:36} we have
\begin{align}
  \label{eq:55}
  \E_{\hat{\QQ}}[\xi^X_T] =
v_0-  \E_{\hat{\QQ}} \left[\int_{[0,T]} P_t \,dX_t \right]
-\frac{1}{2}\E_{\hat{\QQ}} \left[\frac{1}{2} \int_{[0,T]}  (\eta^{X}_t)^2 \, \mu(dt)\right].
\end{align}
Proceeding as for~\eqref{eq:49}, \eqref{eq:50}, we estimate
\begin{align}
  \label{eq:56}
   \E_{\hat{\QQ}} \left[\int_{[0,T]} P_t \,dX_t \right] \geq
-\E_{\hat{\QQ}}\left[\int_{[0,T]}(\eta^X_u- \zeta_0) \hat{\alpha}_u \, \mu(du)  +
  \hat{M}_0x_0\right]
\end{align}
and observe that for $X=\hat{X}$ we actually get an equality
here  due to the support assumption~\eqref{eq:35}. Therefore,
 \begin{align}
   \label{eq:57}
   \E_{\hat{\QQ}}[\xi^X_T] &\leq v_0+
\E_{\hat{\QQ}}\left[\int_{[0,T]}(\eta^X_t- \zeta_0) \hat{\alpha}_t  -\frac{1}{2}(\eta^X_t)^2\, \mu(dt)  +
  \hat{M}_0x_0\right]\\
&=v_0+\E_{\hat{\QQ}}\left[\int_{[0,T]} \frac{1}{2}\left\{
  (\hat{\alpha}_t  -\zeta_0)^2 -(\eta^X_t-\hat{\alpha}_t)^2-\zeta_0^2\right\}\mu(dt)+\hat{M}_0x_0\right]\\
&\leq v_0+\E_{\hat{\QQ}}\left[\int_{[0,T]} \frac{1}{2}\left\{
  (\hat{\alpha}_t-\zeta_0)^2-\zeta_0^2\right\}\mu(dt)+\hat{M}_0x_0\right]
 \end{align}
 where, again, we have equality everywhere for $X=\hat{X}$ by choice
 of $\hat{\alpha}=\eta^{\hat{X}}$. It follows that~\eqref{eq:54} does
 hold true as remained to be shown.

\subsection{Proof of the upper bound}

In order to prove ``$\leq$'' in our dual description~\eqref{eq:11}, we
have to construct for any $\hat{\xi}_0<\pi(H)$ a triple
$(\hat{\QQ},\hat{M},\hat{\alpha})$ as considered in Theorem~\ref{thm:1} such that
\begin{align}
  \label{eq:58}
  \hat{\xi}_0 < \E_{\hat{\QQ}}[H] - \frac{1}{2}\|\hat{\alpha}-\zeta_0\|^2_{L^2(\hat{\QQ}
  \otimes \mu)}-\hat{M}_0x_0-\frac{1}{2}\iota x_0^2.
\end{align}

Observe that, by changing to an equivalent measure if necessary, we
can assume without loss of generality that
\begin{align}
  \label{eq:59}
  H\in L^1(\PP), \sup_{0 \leq t \leq T}|P_t| \in L^6(\PP).
\end{align}
For notational convenience, let us introduce the class
\begin{align}
  \label{eq:60}
   \cX^2 \set \{X \in \cX \;:\; \Xup_T+\Xdo_T \in L^2(\PP)\}
\end{align}
and let us denote by
\begin{align}
  \label{eq:61}
  \hat{v}_0 \set  \hat{\xi}_{0}+\frac{1}{2}(\iota x_0^2+\delta_0 \zeta_{0}^2)
\end{align}
the constant from~\eqref{eq:37} corresponding to
$\xi_0 = \hat{\xi}_0$.

We start with the construction of $\hat{\QQ}$ which emerges from a
standard separation argument:

\begin{Lemma}\label{lem:3}
  There is a probability measure $\hat{\QQ}$ with bounded density with
  respect to $\PP$ such that
 \begin{align}
   \label{eq:62}
   \hat{v}_0<\E_{\hat{\QQ}}[H]+\inf_{X \in \cX^2\text{ with } X_T=0}
   \E_{\hat{\QQ}}\left[\Lambda^X_T\right].
 \end{align}
\end{Lemma}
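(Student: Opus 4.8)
The plan is to read~\eqref{eq:62} as a Hahn--Banach separation statement. Introduce
\[
\cD' \set \bigl\{f \in L^1(\PP) : f \geq H+\Lambda^X_T \text{ for some } X \in \cX^2 \text{ with } X_T=0\bigr\},
\]
which genuinely lies in $L^1(\PP)$ because for $X\in\cX^2$ the bounds on $\delta/\rho$ from Assumption~\ref{asp:2}, the mass $\mu([0,T])=\delta_0$ and the integrability reductions~\eqref{eq:59} make $\Lambda^X_T$ integrable (see~\eqref{eq:38}--\eqref{eq:39}). By the wealth formula~\eqref{eq:36} and the definition~\eqref{eq:61} of $\hat v_0$, the hypothesis $\hat\xi_0<\pi(H)$ says that $H$ cannot be super-replicated from initial cash $\hat\xi_0$ by \emph{any} $X\in\cX$ with $X_T=0$, and a fortiori $\hat v_0\notin\cD'$. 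The set $\cD'$ is convex, since $\{X\in\cX^2:X_T=0\}$ is convex, $X\mapsto\Lambda^X_T$ is convex by Lemma~\ref{lem:1}, and $\cD'+L^1_+(\PP)=\cD'$. Granting that $\cD'$ is $L^1$-closed, the geometric Hahn--Banach theorem strictly separates the point $\hat v_0$ from $\cD'$: adjusting the sign, there are $Z\in L^\infty(\PP)$ and $c>0$ with $\E_\PP[Zf]\geq\hat v_0\,\E_\PP[Z]+c$ for all $f\in\cD'$. Stability of $\cD'$ under adding nonnegative functions then forces $Z\geq0$ $\PP$-a.s., whence $c>0$ forces $\E_\PP[Z]>0$; so $\hat\QQ\set Z\PP/\E_\PP[Z]$ has bounded density and, for every $X\in\cX^2$ with $X_T=0$,
\[
\E_{\hat\QQ}[H]+\E_{\hat\QQ}[\Lambda^X_T]=\E_{\hat\QQ}[H+\Lambda^X_T]\geq\hat v_0+c/\E_\PP[Z]>\hat v_0 ,
\]
and taking the infimum over such $X$ yields~\eqref{eq:62}.

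Thus the substance of the lemma is the $L^1$-closedness of $\cD'$. Let $f_n\in\cD'$ with $f_n\to f$ in $L^1(\PP)$; after passing to a subsequence we may assume $f_n\to f$ $\PP$-a.s.\ and $|f_n|\leq F$ for some fixed $F\in L^1(\PP)$, and we pick $X_n\in\cX^2$ with $X_{n,T}=0$ and $f_n\geq H+\Lambda^{X_n}_T$. Applying the coercivity bound~\eqref{eq:40} with $l^2\set(f_n-H)^+\leq F+|H|$ controls the total variations uniformly,
\[
X^{n,\uparrow}_T+X^{n,\downarrow}_T\leq G\set C\Bigl(\sqrt{F+|H|}+\sup_{0\leq t\leq T}|P_t|\Bigr),
\]
with $G\in L^2(\PP)$ by~\eqref{eq:59}; in particular the random measures $dX^{n,\uparrow},dX^{n,\downarrow}$ have $\PP$-a.s.\ total mass at most $G$.

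On this uniform variation bound I would run a Komlós-type extraction: for every rational $q\in[0,T]$ and for $q=T$ the scalars $X^{n,\uparrow}_q,X^{n,\downarrow}_q$ are bounded in $L^1(\PP)$, so a diagonal application of Komlós' theorem gives convex combinations $\tilde X_n\in\conv\{X_n,X_{n+1},\dots\}$ for which $\tilde X^{n,\uparrow}_q$ and $\tilde X^{n,\downarrow}_q$ converge $\PP$-a.s.\ for all these $q$; taking right-continuous versions of the limiting functions of $q$ produces adapted right-continuous increasing $A,B$ with $A_{0-}=B_{0-}=0$, $A_T+B_T\leq G$, and $\tilde X^{n,\uparrow}\to A$, $\tilde X^{n,\downarrow}\to B$ weakly as Borel measures on $[0,T]$, $\PP$-a.s. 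Then $X\set x_0+A-B\in\cX^2$, and $X_T=0$ since $\tilde X_{n,T}=0$ for every $n$ and weak convergence on the \emph{closed} interval $[0,T]$ passes to total masses. The matching convex combinations $\tilde f_n$ of the $f_k$ still converge $\PP$-a.s.\ to $f$, and convexity of $\Lambda$ gives $\tilde f_n\geq H+\Lambda^{\tilde X_n}_T$, so letting $n\to\infty$ and using the lower semicontinuity~\eqref{eq:41} of $X\mapsto\Lambda^X_T$ yields $f\geq H+\Lambda^X_T$ $\PP$-a.s., i.e.\ $f\in\cD'$. I expect this Komlós/weak-compactness step to be the main obstacle: the convexity, coercivity and lower semicontinuity of $X\mapsto\Lambda^X_T$ are already in hand from Lemma~\ref{lem:1}, but extracting an almost surely weakly convergent sequence of bounded-variation strategies whose limit still satisfies the terminal constraint $X_T=0$ (which is why the weak convergence has to be taken on $[0,T]$ rather than on $[0,T)$) requires the diagonal argument together with some care at the endpoint.
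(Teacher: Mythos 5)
Your proposal is correct and follows essentially the same route as the paper's proof: a Hahn--Banach separation argument whose real content is the closedness of a convex set built from $X\mapsto\Lambda^X_T$, established exactly as in the paper via the coercivity bound~\eqref{eq:40}, a Koml\'os-type extraction of convex combinations converging weakly as measures, and the convexity/lower-semicontinuity statement~\eqref{eq:41} of Lemma~\ref{lem:1}. The only (harmless) differences are cosmetic: you separate in $L^1(\PP)$ using the upper set $\cD'$ indexed by $\cX^2$ (so you must, and correctly do, check that the Koml\'os limit stays in $\cX^2$ and satisfies $X_T=0$ via the $L^2$-bound $G$ and mass convergence), whereas the paper works with the set $\cC$ of all $X\in\cX$ with $X_T=0$, proves closedness in $L^0$ and then intersects with $L^1$, citing a ready-made Koml\'os lemma instead of your diagonal construction.
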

\begin{proof}
In light of our expression~\eqref{eq:36} for the investor's terminal
cash position, the condition $\hat{\xi}_0<\pi(H)$ translates into
\begin{align}
  \label{eq:63}
  H-\hat{v}_0 \not\in \cC \set \left\{-\Lambda^X_T-A: X \in \cX \text{
  with}
 \, X_T=0, A \in L^0_+(\cF_T)\right\}.
\end{align}

We will argue below that $\cC$ is a convex and closed subset of
$L^0(\cF_T)$.  It follows then that $\cC \cap L^1(\P)$
is a convex and closed subset of $L^1(\P)$ that, by~\eqref{eq:63},
does not contain $H-\hat{v}_0 \in L^1(\P)$. By the Hahn-Banach
Separation Theorem we can thus find $Z \in L^\infty(\cF_T)-\{0\}$ such
that
\begin{align}
  \label{eq:64}
  \E[Z(H-\hat{v}_0)] > \sup_{C \in \cC \cap L^1(\P)} \E[ZC].
\end{align}
Since $L^1_-(\P)-\Lambda^0_T \subset \cC$, we must have $Z \geq 0$ almost
surely. We can therefore define a probability measure $\hat{\QQ} \ll \P$ via
\begin{align}
  \label{eq:65}
  \frac{d\hat{\QQ}}{d\P} \set \frac{Z}{\E[Z_T]}.
\end{align}
Then~\eqref{eq:64} readily yields~\eqref{eq:62} upon observing that
for $X \in \cX^2$ we have $\Lambda^X_T \in L^1(\P)$ due to
Assumption~\ref{asp:2} and~\eqref{eq:59}.

It remains to prove that $\cC$ is indeed a convex, closed subset of
$L^0(\cF_T)$. Convexity is immediate from the convexity of
$X \mapsto \Lambda^X_T$ established in Lemma~\ref{lem:1}. For
closedness take $X^n \in \cX$ with $X^n_T=0$ and
$A^n \in L^0_+(\cF_T)$, $n=1,2,\dots$, such that $\Lambda^{X^n}_T+A^n$
converges in $L^0(\P)$ or, without loss of generality, even almost
surely to some finite limit $L$. We have to show that $-L \in \cC$,
i.e.,
\begin{align}
  \label{eq:66}
  L \geq \Lambda^X_T \text{ for some } X \in \cX.
\end{align}
By the given convergence, $\sup_n \Lambda^{X^n}_T$ is finite almost
surely. Hence, by our estimate~\eqref{eq:40} also
$\sup_n(X^{n,\uparrow}_T+X^{n,\downarrow}_T)$ is finite almost
surely. In particular,
$\conv(X^{n,\uparrow}_T+X^{n,\downarrow}_T,n=1,2,\dots)$ is bounded
almost surely, and thus in probability. So, by a Komlos-lemma as
Lemma~3.4 of \cite{Guasoni:02} or Lemma~3.1 in \cite{BankVoss:18b},
there is a cofinal sequence of convex combinations $\tilde{X}^n$ of
$X^n, X^{n+1},\dots$, such that almost surely $\tilde{X}^{n,\uparrow}$
and $\tilde{X}^{n,\downarrow}$ converge weakly as Borel-measures on
$[0,T]$ to, respectively, $A$ and $B$, two adapted, right-continuous,
and increasing processes with $A_{0-}=B_{0-}=0$. By
lower-semicontinuity and convexity of $X \mapsto \Lambda^X_T$,
see~\eqref{eq:41} in Lemma~\ref{lem:1}, it follows that for
$X \set x_0+ A-B \in \cX$ we indeed have
\begin{align}
  \label{eq:67}
  \Lambda^X_T \leq \liminf_n \Lambda^{\tilde{X}^n}_T \leq \liminf_n \Lambda^{X^n}_T \leq L
\end{align}
as desired.
\end{proof}

The martingale $\hat{M}$ is constructed as a Lagrange multiplier for
the constraint $X_T=0$ in the infimum of~\eqref{eq:62}:

\begin{Lemma}\label{lem:4}
 We have
 \begin{align}
   \label{eq:68}
   \inf_{X \in \cX^2\text{ with } X_T=0}&
   \E_{\hat{\QQ}}\left[\Lambda^X_T\right]
%\\=&
=\sup_{M \in \cM^2(\hat{\QQ})}\inf_{X \in \cX^2}
   \E_{\hat{\QQ}}\left[\Lambda^X_T-M_TX_T\right].
 \end{align}
\end{Lemma}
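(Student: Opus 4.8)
The plan is to recognise \eqref{eq:68} as a minimax identity for the convex-concave Lagrangian
$\mathcal L(X,M)\set\E_{\hat\QQ}[\Lambda^X_T-M_TX_T]$ on $\cX^2\times\cM^2(\hat\QQ)$ and to establish it by the standard two-sided argument: the ``$\geq$'' (weak duality) direction is trivial, and the content lies in the reverse inequality. First I would observe that for any fixed $X\in\cX^2$ with $X_T=0$ the inner term $\E_{\hat\QQ}[\Lambda^X_T-M_TX_T]=\E_{\hat\QQ}[\Lambda^X_T]$ is independent of $M$; hence $\sup_M\inf_{X}\mathcal L(X,M)\leq\inf_{X:X_T=0}\E_{\hat\QQ}[\Lambda^X_T]$, giving ``$\geq$'' in \eqref{eq:68}. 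Note also that the right-hand supremum is finite: from Lemma~\ref{lem:1}\eqref{eq:40} together with $\sup_t|P_t|\in L^6(\PP)$ (hence in $L^2(\hat\QQ)$, since $\hat\QQ$ has bounded density) one gets $\E_{\hat\QQ}[\Lambda^X_T]\geq c\,\E_{\hat\QQ}[(\Xup_T+\Xdo_T)^2]-C'$ for constants $c>0$, $C'<\infty$, so the functional $X\mapsto\E_{\hat\QQ}[\Lambda^X_T]$ is coercive on $\cX^2$ and the constrained infimum is attained.

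For the nontrivial inequality ``$\leq$'' I would introduce the optimal value $m\set\inf_{X\in\cX^2, X_T=0}\E_{\hat\QQ}[\Lambda^X_T]\in\RR$ and show that for every $\epsilon>0$ there is a martingale $M\in\cM^2(\hat\QQ)$ with $\inf_{X\in\cX^2}\E_{\hat\QQ}[\Lambda^X_T-M_TX_T]\geq m-\epsilon$. The natural route is a Hahn–Banach separation in a suitable $L^2$-space: consider, in the Hilbert space $\mathcal H\set L^2(\cF_T,\hat\QQ)$ paired with itself, the convex set
\[
\mathcal K\set\bigl\{(X_T,\ \Lambda^X_T+A):X\in\cX^2,\ A\in L^0_+(\cF_T)\cap L^1(\hat\QQ)\bigr\}\subset L^0(\cF_T)\times\RR\text{-valued data},
\]
or, more cleanly, work with the value function $v(\xi)\set\inf\{\E_{\hat\QQ}[\Lambda^X_T]:X\in\cX^2,\ X_T=\xi\}$ defined for $\xi\in L^2(\cF_T,\hat\QQ)$. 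Convexity of $v$ follows from convexity of $X\mapsto\Lambda^X_T$ (Lemma~\ref{lem:1}), and the coercivity estimate above shows $v$ is proper and bounded below. The claimed identity \eqref{eq:68} is then exactly the statement that $v(0)=\sup_{M}\{-v^*(M_T)\}$ where the Fenchel conjugate is taken over terminal values of square-integrable martingales — i.e. biduality $v(0)=v^{**}(0)$ — once one checks that the effective dual variables $\zeta\in L^2(\cF_T,\hat\QQ)$ appearing in $v^*$ may be replaced by $\E_{\hat\QQ}[\zeta\mid\cF_\cdot]=M$ without changing the value, since $\E_{\hat\QQ}[\zeta X_T]=\E_{\hat\QQ}[M_T X_T]$ whenever $X_T\in L^2$, and conversely $M_T$ ranges over all of $L^2(\cF_T,\hat\QQ)$ as $M$ ranges over $\cM^2(\hat\QQ)$.

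Thus the key steps are: (i) weak duality and finiteness/coercivity via \eqref{eq:40} and \eqref{eq:59}; (ii) lower semicontinuity of $v$ on $L^2(\cF_T,\hat\QQ)$, which I would derive from \eqref{eq:41} of Lemma~\ref{lem:1} combined with a Komlós argument as in the proof of Lemma~\ref{lem:3} to pass from an $L^2$-convergent sequence of terminal values to a weakly convergent sequence of strategies; (iii) applying the Fenchel–Moreau theorem to conclude $v(0)=v^{**}(0)$; and (iv) the routine identification of the dual variable with a martingale. The main obstacle is step (ii): one needs $v$ to be weakly lower semicontinuous (equivalently, strongly l.s.c.\ by convexity) on $\mathcal H$, and the subtlety is that a sequence $X^n\in\cX^2$ with $X^n_T\to\xi$ in $L^2(\hat\QQ)$ and $\E_{\hat\QQ}[\Lambda^{X^n}_T]\to v(\xi)$ need not itself converge; one extracts convex combinations $\tilde X^n$ whose increasing/decreasing parts converge weakly as measures on $[0,T]$ (using that $\sup_n\E_{\hat\QQ}[(\Xup_T+\Xdo_T)^2]<\infty$ forces tightness), obtains a limit $X\in\cX^2$ with $X_T=\xi$ (here one must check the terminal value passes to the limit — true since evaluation at $T$ is continuous for the weak-measure convergence, as $T$ is the right endpoint), and then invokes \eqref{eq:41} to get $\E_{\hat\QQ}[\Lambda^X_T]\leq\liminf_n\E_{\hat\QQ}[\Lambda^{\tilde X^n}_T]\leq v(\xi)$. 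With l.s.c.\ in hand, Fenchel–Moreau and the martingale identification close the argument.
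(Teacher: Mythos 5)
Your proposal is correct in outline but takes a genuinely different route from the paper. You read \eqref{eq:68} as Fenchel--Moreau biduality for the perturbation function $v(\xi)=\inf\{\E_{\hat{\QQ}}[\Lambda^X_T]:X\in\cX^2,\ X_T=\xi\}$ on $L^2(\cF_T,\hat{\QQ})$, identifying dual variables $\zeta\in L^2(\cF_T,\hat{\QQ})$ with terminal values of martingales $M\in\cM^2(\hat{\QQ})$; the paper instead penalizes the constraint by $n\|X_T\|_{L^2(\hat{\QQ})}$, writes the penalty as a supremum over the ball $\{\|M_T\|_{L^2(\hat{\QQ})}\leq n\}$, and interchanges $\inf$ and $\sup$ by Sion's minimax theorem using weak compactness of that ball, see \eqref{eq:70} and \eqref{eq:73}. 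Both arguments rest on the same analytic core --- the coercivity coming from \eqref{eq:40} together with \eqref{eq:59}, a Komlos-type passage to convex combinations, and the pathwise lower semicontinuity \eqref{eq:41} --- which in your scheme yields (strong, hence weak) lower semicontinuity of the convex function $v$, and in the paper's shows that the penalized values converge to the constrained infimum. Your route buys a cleaner conceptual statement (no minimax theorem, no truncation to balls and limit in $n$) at the cost of setting up the conjugate-duality framework; the identification of $\zeta$ with $M_T$ is indeed immediate. One step you elide and must spell out: \eqref{eq:41} is only a pathwise liminf and $\Lambda^{\tilde{X}^n}_T$ is not bounded below, so the inequality $\E_{\hat{\QQ}}[\Lambda^X_T]\leq\liminf_n\E_{\hat{\QQ}}[\Lambda^{\tilde{X}^n}_T]$ needs Fatou's lemma with uniformly integrable negative parts; this is precisely where the paper invokes the H\"older estimate \eqref{eq:72}, bounding $\Lambda^{\tilde{X}^n}_T\wedge 0$ in $L^{3/2}(\hat{\QQ})$ via $\sup_{0\leq t\leq T}|P_t|\in L^6$ and the $L^2(\hat{\QQ})$-bound on the total variations --- the same ingredients you already use for coercivity, so the omission is fillable but, as written, the lower-semicontinuity step is incomplete. (Also, attainment of the constrained infimum is neither needed for the argument nor a direct consequence of coercivity alone, so that claim is best dropped.)
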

In conjunction with~\eqref{eq:62}, this lemma shows in particular that
there is $\hat{M} \in \cM^2(\hat{\QQ})$ with
\begin{align}
   \label{eq:69}
\hat{v}_0 < \E_{\hat{\QQ}}[H] +\inf_{X \in \cX^2}
\E_{\hat{\QQ}}\left[\Lambda^X_T-\hat{M}_TX_T\right].
 \end{align}
\begin{proof}
 We start by observing that
  \begin{align}
    \label{eq:70}
   \inf_{X \in \cX^2\text{ with } X_T=0}
   \E_{\hat{\QQ}}\left[\Lambda^X_T\right]
& = \lim_n    \inf_{X \in \cX^2} \left\{
   \E_{\hat{\QQ}}\left[\Lambda^X_T\right]+n\|X_T\|_{L^2(\hat{\QQ})}\right\}
 \\&=  \lim_n \inf_{X \in \cX^2}
  \sup_{\|M_T\|_{L^2(\hat{\QQ})} \leq n} \E_{\hat{\QQ}}\left[\Lambda^X_T-M_TX_T\right].
  \end{align}
  Indeed, the second identity is immediate as is ``$\geq$'' in the
  first line. For ``$\leq$'' there, take $X^n \in \cX^2$ such that
  $\E_{\hat{\QQ}}\left[\Lambda^{X^n}_T\right]+n\|X^n_T\|_{L^2(\hat{\QQ})}$
  approaches the limit in the first line.  Then
  $\sup_n \E_{\hat{\QQ}}\left[\Lambda^{X^n}_T\right]<\infty$ and, by
  convexity of $X \mapsto \Lambda^X_T$, we even have
  $\sup_{X \in \conv(X^n, n=1,2,\dots)}
  \E_{\hat{\QQ}}\left[\Lambda^{X}_T\right]<\infty$. It thus follows
  from~\eqref{eq:40} that
  $\conv(X^{n,\uparrow}_T+X^{n,\downarrow}_T, n=1,2,\dots)$ is bounded
  in $L^2(\hat{\QQ})$. In particular, it is bounded in $L^0$ and we
  can thus apply a Komlos-result such as Lemma~3.1
  in~\cite{BankVoss:18b} to obtain
  $\tilde{X}^n \in \conv(X^n,X^{n+1},\dots)$, $n=1,2,\dots$, that
  converge to some $\tilde{X} \in \cX$ in the way required for the
  lower-semicontinuity statement~\eqref{eq:41} in
  Lemma~\ref{lem:1}. We claim that 
  \begin{align}
    \label{eq:71}
    \tilde{X}_T=0 \text{ with } \E_{\hat{\QQ}} \left[\Lambda^{\tilde{X}}_T\right] \leq \lim_n \left\{
  \E_{\hat{\QQ}}\left[\Lambda^{\tilde{X}^n}_T\right]+n\|\tilde{X}^{n}_T\|_{L^2(\hat{\QQ})}\right\}.
  \end{align}
  Then, since by construction of the $(\tilde{X}^n)_{n=1,2,\dots}$
  this limit coincides with the one in~\eqref{eq:70}, we obtain that
  ``$\leq$'' must hold there. For the proof of~\eqref{eq:71} note that
  $(\Lambda^{\tilde{X}^n}_T)$ is bounded in $L^1(\hat{\QQ})$ because
  $\conv(X^{n,\uparrow}_T+X^{n,\downarrow}_T, n=1,2,\dots)$ is bounded
  in $L^2(\hat{\QQ})$. With the limit in~\eqref{eq:71} finite, this
  implies $\|\tilde{X}^{n}_T\|_{L^2(\hat{\QQ})} \to 0$ and so indeed
  $\tilde{X}_T=0$. For the estimate in~\eqref{eq:71}, observe that by
  Fatou's lemma and the lower-semicontinuity of $X \mapsto \Lambda^X$,
  it thus suffices to show that
  $(\Lambda^{\tilde{X}^n}_T \wedge 0)_{n=1,2,\dots}$ is uniformly
  $\hat{\QQ}$-integrable. This, in turn, follows by observing that due
  to H\"older's inequality (with $p=4$, $q=4/3$)
  \begin{align}
    \label{eq:72}
    \E_{\hat{\QQ}}\left[\left|\Lambda^{\tilde{X}^n}_T \wedge 0\right|^{3/2}\right]
&\leq \E_{\hat{\QQ}}\left[\left|\int_{[0,T]} P_t
  \,d\tilde{X}^n_t \wedge 0\right|^{3/2}\right]
\\& \leq \E_{\hat{\QQ}}\left[\sup_{0 \leq t \leq T}|P_t|^{3/2}(\tilde{X}^{n,\uparrow}_T+\tilde{X}^{n,\downarrow}_T)^{3/2}\right]
\\& \leq \E_{\hat{\QQ}}\left[\sup_{0 \leq t \leq T}|P_t|^{6}\right]^{1/4}\E_{\hat{\QQ}}\left[(\tilde{X}^{n,\uparrow}_T+\tilde{X}^{n,\downarrow}_T)^2\right]^{3/4}
  \end{align}
  is bounded because of~\eqref{eq:59} and because of the already
  established $L^2(\hat{\QQ})$-boundedness of
  $\conv(X^{n,\uparrow}_T+X^{n,\downarrow}_T, n=1,2,\dots)$.

  With~\eqref{eq:70} established, we obtain our assertion~\eqref{eq:68}
  from the minimax relation
  \begin{align}
    \label{eq:73}
    \inf_{X \in \cX^2}&
    \sup_{\|M_T\|_{L^2(\hat{\QQ})} \leq n}
    \E_{\hat{\QQ}}\left[\Lambda^X_T-M_TX_T\right]\\& =
    \sup_{\|M_T\|_{L^2(\hat{\QQ})} \leq n} \inf_{X \in \cX^2}\E_{\hat{\QQ}}\left[\Lambda^X_T-M_TX_T\right].
  \end{align}
  For this we endow $\cX^2$ with the $L^2(\P)$-norm of the
  $\omega$-wise total variation of its elements,
  $\|X\| \set \E_{\P}[(\Xup_T+\Xdo_T)^2]^{1/2}$, and the
  $L^2(\hat{\QQ})$-ball with the weak topology. Then both of these
  sets are convex subsets of topological vector spaces and the latter
  set is even compact. Moreover,
  $(X,M_T) \mapsto \E_{\hat{\QQ}}\left[\Lambda^X_T-M_TX_T\right]$ is
  continuous and convex in $X$ and continuous and concave (even
  affine) in $M_T$. We can thus apply Sion's minimax theorem
  (\cite{Komiya:88}) to obtain~\eqref{eq:73}.
 \end{proof}

Our final lemma constructs $\hat{\alpha}$:
\begin{Lemma}\label{lem:5}
  There is an optional $\hat{\alpha} \in L^2(\hat{\QQ} \otimes \mu)$ such that
  \begin{align}
    \label{eq:74}
    |P_t-\hat{M}_t| \leq \frac{\rho_t}{\delta_t}
    \E_{\hat{\QQ}}\left[\int_{[t,T]} \hat{\alpha}_u \,\mu(du)\middle | \cF_t
    \right],
 \quad 0 \leq t \leq T,
  \end{align}
 and
 \begin{align}
   \label{eq:75}
\inf_{X \in \cX^2} &
   \E_{\hat{\QQ}}\left[\Lambda^X_T-\hat{M}_TX_T\right]
%\\&
 = -\frac{1}{2}\|\hat{\alpha}-\zeta_0\|^2_{L^2(\hat{\QQ}\otimes \mu)}-\hat{M}_0x_0+\frac{1}{2}\zeta_0^2\delta_0.
 \end{align}
  \end{Lemma}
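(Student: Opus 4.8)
The plan is to prove the two inequalities in~\eqref{eq:75} separately: ``$\ge$'' will hold for \emph{every} admissible $\hat\alpha$ and is proved exactly as the lower bound in Section~\ref{sec:ProofLowerBound}, while ``$\le$'' is obtained by constructing a particular $\hat\alpha$ (and matching strategies) for which that estimate becomes sharp. For ``$\ge$'', fix an optional $\hat\alpha\in L^2(\hat\QQ\otimes\mu)$ satisfying~\eqref{eq:74} and any $X\in\cX^2$. As $\hat M$ is continuous (Assumption~\ref{asp:1}) and $X$ has bounded variation, integration by parts gives $\int_{[0,T]}\hat M_t\,dX_t-\hat M_TX_T=-\hat M_0x_0-\int_0^TX_t\,d\hat M_t$, and the last integral is a genuine $\hat\QQ$-martingale since $\sup_{0\le t\le T}|X_t|\le|x_0|+\Xup_T+\Xdo_T\in L^2(\hat\QQ)$ (recall $\hat\QQ$ has bounded density) and $\hat M\in\cM^2(\hat\QQ)$. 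Writing $P_t\,dX_t=(P_t-\hat M_t)\,dX_t+\hat M_t\,dX_t$, estimating $\int_{[0,T]}(P_t-\hat M_t)\,dX_t\ge-\int_{[0,T]}|P_t-\hat M_t|\tfrac{\delta_t}{\rho_t}\,d\eta^X_t$, and then using~\eqref{eq:74} together with Fubini and the optional projection exactly as in~\eqref{eq:48}--\eqref{eq:51}, we obtain
\begin{align*}
  \E_{\hat\QQ}\!\left[\Lambda^X_T-\hat M_TX_T\right]
  &\ge \E_{\hat\QQ}\!\left[\int_{[0,T]}\Big\{\tfrac12(\eta^X_t-\hat\alpha_t)^2-\tfrac12(\hat\alpha_t-\zeta_0)^2+\tfrac12\zeta_0^2\Big\}\mu(dt)\right]-\hat M_0x_0\\
  &\ge -\tfrac12\|\hat\alpha-\zeta_0\|^2_{L^2(\hat\QQ\otimes\mu)}-\hat M_0x_0+\tfrac12\zeta_0^2\delta_0
\end{align*}
using $\mu([0,T])=\delta_0$; taking the infimum over $X\in\cX^2$ gives ``$\ge$'' in~\eqref{eq:75}.

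This chain is an equality precisely when $\hat\alpha=\eta^X$ $\hat\QQ\otimes\mu$-a.e., when~\eqref{eq:74} holds with equality on $\{d\eta^X>0\}$, and when $X$ only buys where $\hat M\ge P$ and only sells where $\hat M\le P$. This dictates the construction: we take $\hat\alpha=\eta^{\hat X}$, where $\hat X$ realises the minimiser of the auxiliary convex singular-control problem $\min_{\eta}\E_{\hat\QQ}\big[-\int_{[0,T]}\psi_t\,d\eta_t+\tfrac12\int_{[0,T]}\eta_t^2\,\mu(dt)\big]$ over non-decreasing optional $\eta$ with $\eta_{0-}=\zeta_0$, where $\psi_t:=|P_t-\hat M_t|\tfrac{\delta_t}{\rho_t}$. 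Its variational characterisation is exactly $\psi_t\le\E_{\hat\QQ}[\int_{[t,T]}\eta_u\,\mu(du)\mid\cF_t]$ with equality on $\{d\eta>0\}$, the conditional expectation being the ``marginal cost'' of raising $\eta$ at time~$t$. Since $P$ and $\hat M$ are continuous and $\sup_{0\le t\le T}|P_t|\in L^6(\PP)$ by~\eqref{eq:59}, $\sup_{0\le t\le T}|\hat M_t|\in L^2(\hat\QQ)$ by Doob's inequality, and $\delta/\rho$ is bounded (Assumption~\ref{asp:2}), the target $\psi$ is continuous with $\sup_{0\le t\le T}\psi_t\in L^2(\hat\QQ)$; hence the stochastic representation theorem of~\cite{BankElKaroui:04}, in the form representing $\psi$ through a \emph{running supremum} of a progressively measurable index and with the initial value $\zeta_0$ built in, applies under $\hat\QQ$ and yields an optional, non-decreasing $\hat\alpha$ with $\hat\alpha_{0-}=\zeta_0$ and $\hat\alpha\in L^2(\hat\QQ\otimes\mu)$ (by its integrability estimates) satisfying~\eqref{eq:74}, with equality on $\{d\hat\alpha>0\}$.

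For ``$\le$'', note that on $\{d\hat\alpha>0\}$ one has $\psi_t>0$, hence $P_t\ne\hat M_t$, so $\operatorname{sign}(P_t-\hat M_t)$ is well defined $d\hat\alpha$-a.e.; set $d(\hat X^{\uparrow}_t+\hat X^{\downarrow}_t):=\tfrac{\delta_t}{\rho_t}\,d\hat\alpha_t$ and $d\hat X_t:=-\operatorname{sign}(P_t-\hat M_t)\,d(\hat X^{\uparrow}_t+\hat X^{\downarrow}_t)$, so that $\hat X^{\uparrow},\hat X^{\downarrow}$ are mutually singular increasing processes --- hence the Hahn decomposition --- and $\eta^{\hat X}=\hat\alpha$ by~\eqref{eq:39}. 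Truncating to $\hat\alpha^n:=\hat\alpha\wedge n$ (for $n\ge\zeta_0$) gives, analogously, strategies $\hat X^n\in\cX^2$ (of bounded total variation) with $\eta^{\hat X^n}=\hat\alpha^n$ and
\begin{align*}
  \E_{\hat\QQ}\!\left[\Lambda^{\hat X^n}_T-\hat M_T\hat X^n_T\right]
  = -\E_{\hat\QQ}\!\left[\int_{[0,T]}\psi_t\,d\hat\alpha^n_t\right]-\hat M_0x_0+\tfrac12\E_{\hat\QQ}\!\left[\int_{[0,T]}(\hat\alpha^n_t)^2\,\mu(dt)\right];
\end{align*}
letting $n\uparrow\infty$ by monotone/bounded convergence and using the equality in~\eqref{eq:74} on $\{d\hat\alpha>0\}$ together with Fubini to evaluate $\E_{\hat\QQ}[\int_{[0,T]}\psi_t\,d\hat\alpha_t]=\E_{\hat\QQ}[\int_{[0,T]}(\hat\alpha_u-\zeta_0)\hat\alpha_u\,\mu(du)]$, the right-hand side tends to $-\tfrac12\|\hat\alpha-\zeta_0\|^2_{L^2(\hat\QQ\otimes\mu)}-\hat M_0x_0+\tfrac12\zeta_0^2\delta_0$. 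Thus the infimum in~\eqref{eq:75} is at most this value, which together with ``$\ge$'' (for the $\hat\alpha$ just built) proves the claim.

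I expect the main obstacle to be the invocation of the Bank--El Karoui representation theorem: fixing the precise running-supremum form needed here, incorporating the boundary value $\eta_{0-}=\zeta_0$, checking its regularity and integrability hypotheses for $\psi$ under the measure $\hat\QQ$, and extracting from its output both~\eqref{eq:74} and the flat-off equality on $\{d\hat\alpha>0\}$. The reduction to the scalar control problem for $\eta$, the truncation argument, and the ``$\ge$'' direction (which only mirrors the lower-bound computation) are comparatively routine.
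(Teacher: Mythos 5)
Your overall route is the paper's: reduce to a monotone problem in $\eta$, represent $\psi_t=|P_t-\hat{M}_t|\delta_t/\rho_t$ via the Bank--El Karoui theorem, take $\hat{\alpha}$ as a running supremum with floor $\zeta_0$, verify~\eqref{eq:74} together with a flat-off property on $\{d\hat{\alpha}>0\}$, and compute the value along truncated strategies; your ``$\ge$'' direction and the $\sign(P-\hat{M})$ reduction are exactly the paper's. But two steps you treat as routine are genuine gaps. First, the assertion that $\hat{\alpha}\in L^2(\hat{\QQ}\otimes\mu)$ follows ``by its integrability estimates'' from the representation theorem is unfounded: the class-(D)-type hypothesis on $\psi$ only yields $L^1(\hat{\QQ}\otimes\mu)$-integrability of the running supremum, and square-integrability of $\hat{\alpha}$ is precisely the delicate part of the lemma (it is both what makes $\hat{\alpha}$ an admissible dual variable and what makes your limit $n\uparrow\infty$ legitimate). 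In the paper this is proved by a separate contradiction argument: if $\hat{\alpha}\notin L^2(\hat{\QQ}\otimes\mu)$, a Cauchy--Schwarz lower bound forces the truncated expectation to tend to $+\infty$, while the explicit quadratic expression for the same quantity tends to $-\infty$. Without such an argument, your ``letting $n\uparrow\infty$ by monotone/bounded convergence'' is an $\infty-\infty$ situation (both $\E_{\hat{\QQ}}[\int\psi\,d\hat{\alpha}]$ and $\tfrac12\E_{\hat{\QQ}}[\int\hat{\alpha}^2\,\mu(dt)]$ may be infinite), so neither the ``$\le$'' half of~\eqref{eq:75} nor membership of $\hat{\alpha}$ in $L^2(\hat{\QQ}\otimes\mu)$ is established.

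Second, the invocation of the representation theorem ``in the form with the initial value $\zeta_0$ built in and equality on $\{d\hat{\alpha}>0\}$'' is not the theorem's statement but the very construction to be carried out: the theorem yields, for each stopping time $\tau$, an exact identity involving $\sup_{\tau\le v\le u}a_v$ (supremum started at $\tau$), from which one must define $\hat{\alpha}_u\set\sup_{0\le v\le u}a_v\vee\zeta_0$, deduce~\eqref{eq:74} from $\sup_{\tau\le v\le u}a_v\le\hat{\alpha}_u$, and obtain the flat-off property by noting that at increase times of $\hat{\alpha}$ the two suprema coincide. Moreover, before applying the theorem one must remove the atom of $\mu$ at $T$ (the paper replaces $\mu$ by an atomless measure on $[0,\infty)$ and extends $\psi$ and the filtration beyond $T$), and check the class-(D)/continuity hypotheses there. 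You explicitly flag this as the ``main obstacle'' and leave it unresolved, so the construction at the heart of the lemma is not actually performed in your argument.
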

\begin{proof}
 We first use integration by parts along with the observation that
$\E_{\hat{\QQ}}[\int_0^T X^2_t \,d[\hat{M}]^{1/2}_t]<\infty$ for
$X\in\cX^2$ to obtain that for such $X$ we can write
 \begin{align}
    \label{eq:76}
\E_{\hat{\QQ}}&\left[\Lambda^X_T-\hat{M}_TX_T\right]
\\&=\E_{\hat{\QQ}}\left[\int_{[0,T]} (P_t-\hat{M}_t) \,dX_t+\frac{1}{2}\int_{[0,T]}
(\eta^X_t)^2 \mu(dt)-\hat{M}_0x_0\right]
\\& = \E_{\hat{\QQ}}\left[-\int_{[0,T]} |P_t-\hat{M}_t| \,d\tilde{X}_t+\frac{1}{2}\int_{[0,T]}
(\eta^{\tilde{X}}_t)^2 \mu(dt) \right]-\hat{M}_0x_0
  \end{align}
where $\tilde{X}_t \set x_0-\int_{[0,t]}
\sign{(P_s-M_s)}\,dX_s$, $0 \leq t \leq T$, satisfies
$\eta^X=\eta^{\tilde{X}}$.  So the infimum in~\eqref{eq:75} coincides
with the infimum of this last expectation over all $\tilde{X} \in
\cX^2$. In fact, it coincides with its infimum over all increasing and bounded
$\tilde{X} \in \cX$:
 \begin{align}
   \label{eq:77}
&\inf_{X \in \cX^2} 
   \E_{\hat{\QQ}}\left[\Lambda^X_T-\hat{M}_TX_T\right]
\\&
 = \inf_{\tilde{X} \in \cX \text{ incr., bdd.}} \E_{\hat{\QQ}}\left[-\int_{[0,T]} |P_t-\hat{M}_t| \,d\tilde{X}_t+\frac{1}{2}\int_{[0,T]}
(\eta^{\tilde{X}}_t)^2 \mu(dt) \right]-\hat{M}_0x_0.
 \end{align}
It thus remains to show that this last infimum is
\begin{align}
  \label{eq:78}
  \inf_{\tilde{X} \in \cX \text{ incr., bdd.}} &\E_{\hat{\QQ}}\left[-\int_{[0,T]} |P_t-\hat{M}_t| \,d\tilde{X}_t+\frac{1}{2}\int_{[0,T]}
(\eta^{\tilde{X}}_t)^2 \mu(dt) \right]
\\&= -\frac{1}{2}\|\hat{\alpha}-\zeta_0\|^2_{L^2(\hat{\QQ}\otimes \mu)}+\frac{1}{2}\zeta_0^2\delta_0
\end{align}
for some $\hat{\alpha} \in L^2(\hat{\QQ} \otimes \mu)$.

We will argue below that there is a progressively measurable process
$a$ with upper-rightcontinuous paths such that $\sup_{\tau \leq v \leq
  .} a_v \in L^1(\hat{\QQ}\otimes\mu)$ with
\begin{align}
  \label{eq:79}
   |P_\tau-\hat{M}_\tau| \frac{\delta_\tau}{\rho_\tau} =
    \E_{\hat{\QQ}}\left[\int_{[\tau,T]} \sup_{\tau \leq v \leq u} a_v \,\mu(du)\middle | \cF_\tau
    \right]
\end{align}
for any stopping time $\tau \leq T$, i.e., such that the left-hand
side in~\eqref{eq:79} is the $\hat{\QQ}$-optional projection of the $\mu$-integral
on the right-hand side. Therefore, we get for any increasing and bounded
$\tilde{X} \in \cX$ that
\begin{align}
  \label{eq:80}
  \E_{\hat{\QQ}}&\left[-\int_{[0,T]} |P_t-\hat{M}_t| \,d\tilde{X}_t+\frac{1}{2}\int_{[0,T]}
(\eta^{\tilde{X}}_t)^2 \mu(dt)\right]
\\&=\E_{\hat{\QQ}}\left[-\int_{[0,T]} \int_{[t,T]} \sup_{t \leq v \leq
    u} a_v \,\mu(du) \,\frac{\rho_t}{\delta_t} \,d\tilde{X}_t+\frac{1}{2}\int_{[0,T]}
(\eta^{\tilde{X}}_u)^2 \mu(du)\right]
\\&=\E_{\hat{\QQ}}\left[\int_{[0,T]} \left\{
\frac{1}{2}(\eta^{\tilde{X}}_u)^2- \int_{[0,u]} \sup_{t \leq v \leq
    u} a_v \,d\eta^{\tilde{X}}_t \right\}\mu(du) \right], \label{eq:81}
\end{align}
where for the second equality we applied Fubini's theorem and used that by monotonicity of
$\tilde{X}$ and~\eqref{eq:39} we have
$\frac{\rho_t}{\delta_t}d\tilde{X}_t=d\eta^{\tilde{X}}_t$. Introducing
\begin{align}
  \label{eq:82}
  \hat{\alpha}_u \set \sup_{0 \leq v \leq u} a_v \vee \zeta_0, \quad 0 \leq u \leq T,
\end{align}
we can estimate the expression in $\{\dots\}$ in~\eqref{eq:81} by
\begin{align}
  \frac{1}{2}&(\eta^{\tilde{X}}_u)^2- \int_{[0,u]} \sup_{t \leq v \leq
    u} a_v \,d\eta^{\tilde{X}}_t
\\\label{eq:83} &\geq \frac{1}{2}(\eta^{\tilde{X}}_u)^2- \int_{[0,u]} \hat{\alpha}_u\,d\eta^{\tilde{X}}_t
= \frac{1}{2}(\eta^{\tilde{X}}_u)^2-
    \hat{\alpha}_u(\eta^{\tilde{X}}_u-\zeta_0)
\\&= \frac{1}{2}(\eta^{\tilde{X}}_u-\hat{\alpha}_u)^2-
   \frac{1}{2}(\hat{\alpha}_u-\zeta_0)^2+\frac{1}{2}\zeta_0^2
 \geq - \frac{1}{2}(\hat{\alpha}_u-\zeta_0)^2+\frac{1}{2}\zeta_0^2, \label{eq:84}
\end{align}
which does not depend on the choice of increasing, bounded $\tilde{X} \in
\cX$. Combining~\eqref{eq:81} with this estimate
thus gives
\begin{align}
   \label{eq:85}
\inf_{\tilde{X} \in \cX \text{ incr., bdd.}}& \E_{\hat{\QQ}}\left[-\int_{[0,T]} |P_t-\hat{M}_t| \,d\tilde{X}_t+\frac{1}{2}\int_{[0,T]}
(\eta^{\tilde{X}}_t)^2 \mu(dt) \right]
\\&\geq \E_{\hat{\QQ}}\left[\int_{[0,T]} \left\{-
                     \frac{1}{2}(\hat{\alpha}_u-\zeta_0)^2+\frac{1}{2}\zeta_0^2
                     \right\}\mu(du) \right]
\\&
=-\frac{1}{2}\|\hat{\alpha}-\zeta_0\|_{L^2(\hat{\QQ}\otimes\mu)}^2 +\frac{1}{2}\zeta_0^2\delta_0,
\end{align}
which proves ``$\geq$'' in our assertion~\eqref{eq:78}.

It remains to argue that, in fact, equality holds true, which in
particular includes showing
$\hat{\alpha} \in L^2(\hat{\QQ}\otimes \mu)$. We start by observing that
$\hat{\alpha}$ is at least in $L^1(\hat{\QQ} \otimes \mu)$ because
$\sup_{0 \leq v \leq .} a \in L^1(\hat{\QQ} \otimes \mu)$. Moreover,
$\hat{\alpha}$ is increasing from $\zeta_0$ and it is right-continuous
and adapted by the upper-rightcontinuity and progressive measurability
of $a$. We can thus consider the increasing $\hat{X} \in \cX$ with
$\eta^{\hat{X}}=\hat{\alpha}$. For $\tilde{X}=\hat{X}$ we clearly have
equality in~\eqref{eq:84}, and, in fact, also in~\eqref{eq:83}. Indeed, by
construction, $\hat{X}$ and thus $\eta^{\hat{X}}$ increase only at
times $t$ when our process $a$ reaches a new maximum beyond $\zeta_0$
so that
$\sup_{t \leq v \leq u} a_v =\sup_{0 \leq v \leq u} a_v =
\hat{\alpha}_u$ for any $u \geq t$ at these times. Now, with
$\tilde{X} = \hat{X}\wedge n$ in~\eqref{eq:81} we get from these
considerations that
\begin{align}
  \label{eq:86}
  \E_{\hat{\QQ}}&\left[-\int_{[0,T]} |P_t-\hat{M}_t| \,d (\hat{X}\wedge n)_t+\frac{1}{2}\int_{[0,T]}
(\eta^{\hat{X}\wedge n}_t)^2 \mu(dt)\right]
\\ &=\E_{\hat{\QQ}}\left[\int_{[0,T]} \left\{
\frac{1}{2}(\eta^{\hat{X}\wedge n}_u)^2- \int_{[0,u]} \sup_{t \leq v \leq
    u} a_v \,d\eta^{\hat{X}\wedge n}_t \right\}\mu(du) \right]
\\&=\int_{\{\hat{X}  \leq n\}}\left(-
    \frac{1}{2}(\hat{\alpha}-\zeta_0)^2+\frac{1}{2}\zeta_0^2\right)
    d(\hat{\QQ} \otimes \mu)
\\ \label{eq:87} &\quad+\int_{\{\hat{X}  > n\}}\left(
    \frac{1}{2}(\eta^{\hat{X}\wedge n})^2-\hat{\alpha}\eta^{\hat{X}\wedge n}+\hat{\alpha}\zeta_0\right) d(\hat{\QQ} \otimes \mu).
\end{align}
Once we know that
$\hat{\alpha} = \eta^{\hat{X}} \geq \eta^{\hat{X}\wedge n}$ is in
$L^2(\hat{\QQ}\otimes\mu)$, we can use, respectively, monotone and
dominated convergence to let $n \uparrow \infty$ in the preceding expression
and conclude that
\begin{align}
  \label{eq:88}
  \inf_{\tilde{X} \in \cX \text{ incr., bdd.}} &\E_{\hat{\QQ}}\left[-\int_{[0,T]} |P_t-\hat{M}_t| \,d\tilde{X}_t+\frac{1}{2}\int_{[0,T]}
                                                 (\eta^{\tilde{X}}_t)^2 \mu(dt)\right]
  \\& \leq  \int_{\Omega \times [0,T]}\left(-
      \frac{1}{2}(\hat{\alpha}-\zeta_0)^2+\frac{1}{2}\zeta_0^2\right)
      d(\hat{\QQ} \otimes \mu)+0
  \\&=-\frac{1}{2}\|\hat{\alpha}-\zeta_0\|_{L^2(\hat{\QQ}\otimes\mu)}^2
      +\frac{1}{2}\zeta_0^2\delta_0
\end{align}
as remained to be shown for our claim~\eqref{eq:78}. Now, use the
estimate
\begin{align}
  \label{eq:89}
  \E_{\hat{\QQ}}&\left[-\int_{[0,T]} |P_t-\hat{M}_t| \,d (\hat{X}\wedge n)_t+\frac{1}{2}\int_{[0,T]}
(\eta^{\hat{X}\wedge n}_t)^2 \mu(dt)\right]
\\ &\geq -\left\|\sup_{0 \leq t \leq
  T}\left(|P_t-M_t|\frac{\delta_t}{\rho_t}\right)\right\|_{L^2(\hat{\QQ})}\|
  \eta^{\hat{X}\wedge n}_T-\zeta_0\|_{L^2(\hat{\QQ})}+\frac{1}{2}\|
  \eta^{\hat{X}\wedge n}\|_{L^2(\hat{\QQ}\otimes \mu)}^2
\end{align}
to see that if $\hat{\alpha}=\eta^{\hat{X}}$ was not in
$L^2(\hat{\QQ}\otimes\mu)$ then the expectation in~\eqref{eq:86} would
tend to $+\infty$ by monotone convergence as $n \uparrow \infty$. At
the same time, though, the first integral in~\eqref{eq:87} would
converge to $-\infty$. Moreover,
$\hat{\alpha} \in L^1(\hat{\QQ}\otimes\mu)$ ensures that the
contribution of $\hat{\alpha}\zeta_0$ to the second
$\hat{\QQ}\otimes \mu$-integral there vanishes for
$n\uparrow\infty$. By choice of $\hat{X}$, we have
$\hat{\alpha} = \eta^{\hat{X}} \geq \eta^{\hat{X}\wedge n}$, so that
the remaining contribution from this integral is less than or equal to
0. Hence, the assumption
$\hat{\alpha} \not\in L^2(\hat{\QQ}\otimes\mu)$ leads us to the
contradiction that the identical quantities in~\eqref{eq:86}
and~\eqref{eq:87} would converge to $+\infty$ and $-\infty$ at the
same time when $n\uparrow \infty$.

For the completion of our proof, we still need to construct the
process $a$ from~\eqref{eq:79}. It will be obtained by the
representation theorem from~\cite{BankElKaroui:04}. For this we note
that, while having full support on $[0,T]$ by Assumption~\ref{asp:2},
our measure $\mu$ is not directly applicable for this representation
theorem since it has an atom at time $T$. We thus replace it with the
atomless optional random measure
$\tilde{\mu}(dt) = 1_{[0,T)}(t)\mu(dt) + \lambda e^{-\lambda (t-T)}1_{[T,\infty)}(t)dt$ on
$[0,\infty)$ where $\lambda \set \mu(\{T\})$. We also extend $Y_t\set
|P_t-\hat{M}_t| \frac{\delta_t}{\rho_t}$, $0 \leq t \leq T$, to a
process on $[0,\infty)$ by letting $Y_t \set Y_Te^{-\lambda(t-T)}$ for
$t \geq T$ and we let $\cF_t \set \cF_T$ for $t \geq T$. Then, by
Assumption~\ref{asp:1}, the process $Y$ is adapted, continuous with
limit $\lim_{t \uparrow \infty} Y_t=0$ and it is of class (D) since it
has an integrable upper bound because of $M \in \cM^2(\hat{\QQ})$
and~\eqref{eq:59}. We thus can apply Theorem~3 of
\cite{BankElKaroui:04} in connection with their Remark~2.1 to obtain
an upper-right continuous, progressively measurable $a$ such that  for any stopping time
$\tau$ we have $\sup_{\tau \leq v \leq
  .} a_v \in L^1(\hat{\QQ}\otimes \tilde{\mu})$ with
\begin{align}
  \label{eq:90}
  Y_\tau &= \E_{\hat{\QQ}}\left[\int_{[\tau,\infty)} \sup_{\tau \leq v
  \leq u} a_v \, \tilde{\mu}(du)\middle| \cF_\tau\right].
\end{align}
In fact, for $t \geq T$, one readily checks that $a_t = a_T=Y_T$ will
do. Therefore, we get by uniqueness of $a$ that for any stopping
time $\tau \leq T$ the above representation amounts to
\begin{align}
  \label{eq:91}
     |P_\tau-\hat{M}_\tau| \frac{\delta_\tau}{\rho_\tau} = Y_\tau&=
    \E_{\hat{\QQ}}\left[\int_{[\tau,T)} \sup_{\tau \leq v \leq u} a_v
  \,\tilde{\mu}(du)+ \int_{[T,\infty)} \sup_{\tau \leq v \leq T} a_v \tilde{\mu}(dt)\middle | \cF_\tau
    \right]\\
&= \E_{\hat{\QQ}}\left[\int_{[\tau,T)} \sup_{\tau \leq v \leq u} a_v
  \,\mu(du)+ \sup_{\tau \leq v \leq T} a_v \mu(\{T\}) \middle | \cF_\tau
    \right]
\end{align}
as requested.
\end{proof}

The proof of the upper bound in our duality~\eqref{eq:11} of
Theorem~\ref{thm:1} is now easy to complete. Indeed, the constructed
triple $(\hat{\QQ},\hat{M},\hat{\alpha})$ is as requested by our
theorem. Moreover, recalling the definition~\eqref{eq:61} of
$\hat{v}_0$ and combining~\eqref{eq:69} with \eqref{eq:75} gives the
desired upper bound~\eqref{eq:58}.

\bibliographystyle{plainnat} \bibliography{finance}

\end{document}